\newcommand\org@hypertarget{}
\let\org@hypertarget\hypertarget
\renewcommand\hypertarget[2]{%
  \Hy@raisedlink{\org@hypertarget{#1}{}}#2%
  }
\definecolor{darkgreen}{RGB}{50,190,50}
\definecolor{darkblue}{RGB}{0,0,190}
\definecolor{darkred}{RGB}{238,0,0}
\definecolor{quantum}{RGB}{83,37,127}
\definecolor{quantumlight}{RGB}{169,146,191}
\newcommand{\pr}{^{\prime}}
\newcommand{\ket}[1]{\ensuremath{\left|\right.\!{#1}\!\left.\right\rangle}}
\newcommand{\bra}[1]{\ensuremath{\left\langle\right.\!{#1}\!\left.\right|}}
\newcommand{\braket}[2]{\ensuremath{\langle{#1}|{#2}\rangle}}
\newcommand{\ketbra}[2]{\ensuremath{|{#1}\rangle\!\langle{#2}|}}
\newcommand{\da}[0]{^{\dagger}}
\newcommand{\subtiny}[3]{\ensuremath{_{\hspace{#1 pt}\protect\raisebox{#2 pt}{\tiny{$ #3$}}}}}
\newcommand{\suptiny}[3]{\ensuremath{^{\hspace{#1 pt}\protect\raisebox{#2 pt}{\tiny{$ #3$}}}}}
\newcommand{\subB}{\ensuremath{_{\hspace{-1pt}\protect\raisebox{0pt}{\tiny{$B$}}}}}
\newcommand{\Poi}{\ensuremath{_{\hspace{-0.5pt}\protect\raisebox{0pt}{\tiny{$P$}}}}}
\newcommand{\trho}{\tilde{\rho}}
\newcommand{\E}{\ensuremath{_{\hspace{-1pt}\protect\raisebox{0pt}{\tiny{$E$}}}}}
\newcommand{\off}[1]{\ensuremath{_{\hspace{-1pt}\protect\raisebox{0pt}{\tiny{$\text{off}$}}}^{\hspace{0pt}\protect\raisebox{0pt}{\tiny{$#1$}}}}}
\newcommand{\gibbs}[1]{\ensuremath{\displaystyle\gamma_{\hspace*{-1.0pt}\protect\raisebox{-1.0pt}{\tiny{$ #1 $}}}}}
\newcommand{\psucc}{\ensuremath{\displaystyle P_{\hspace*{-1.5pt}\protect\raisebox{0pt}{\tiny{succ}}}}}
\newcommand{\Sys}{\ensuremath{_{\hspace{-0.5pt}\protect\raisebox{0pt}{\tiny{$S$}}}}}
\newcommand{\R}{\ensuremath{_{\hspace{-1pt}\protect\raisebox{-1pt}{\tiny{$R$}}}}}
\newcommand{\RS}{\ensuremath{_{\hspace{-1pt}\protect\raisebox{-1.0pt}{\tiny{$R\hspace*{-0.5pt}S$}}}}}
\newcommand{\AB}{\ensuremath{_{\hspace{-1pt}\protect\raisebox{0pt}{\tiny{$A\hspace*{-0.5pt}B$}}}}}
\newcommand{\A}{\ensuremath{_{\hspace{-1pt}\protect\raisebox{0pt}{\tiny{$A$}}}}}
\newcommand{\rank}[1]{\text{rank}\left(#1 \right)}
\newcommand{\tr}{\textnormal{Tr}}
\newtheorem{theorem}{Theorem}
\newtheorem{lemma}{Lemma}
\newtheorem{remark}{Remark}
\renewcommand{\p@subsection}{}
\renewcommand{\p@subsubsection}{}
\newcommand{\M}[1]{\mathcal{#1}}
\newcommand{\id}{\mathbb{I}}
\newcommand{\cmax}{C\subtiny{0}{0}{\mathrm{max}}}
\begin{document}

\title{Thermodynamic Constraints on Information Transmission in Quantum Ensembles}

\author{Andr\'e T. Ces\'{a}rio\,\texorpdfstring{\orcidlink{0000-0002-6972-2576}}{}}
\affiliation{Departamento de F\'{\i}sica - ICEx - Universidade Federal de Minas Gerais,
Av. Pres. Ant\^onio Carlos 6627 - Belo Horizonte - MG - Brazil - 31270-901.}
\author{Tiago Debarba\,\texorpdfstring{\orcidlink{0000-0001-6411-3723}}{}}
\email{debarba@utfpr.edu.br}
\affiliation{Departamento Acad{\^ e}mico de Ci{\^ e}ncias da Natureza - Universidade Tecnol{\'o}gica Federal do Paran{\'a}, Campus Corn{\'e}lio Proc{\'o}pio - Paran{\'a} -  86300-000 - Brazil}

\begin{abstract}
The processing of quantum information is limited by fundamental physical constraints on how information can be encoded, transmitted, and extracted. In particular, the non-orthogonality of quantum states limits their distinguishability, and thermodynamic constraints, including the energetic cost of state preparation and quantum operations, further restrict the viability of realistic information protocols. This work explores the impact of such constraints on the preparation, evolution, and readout of quantum information. We demonstrate that preparing the system for encoding and measurement affects the distinguishability and purity of the resulting ensemble of states. Furthermore, we analyze a noisy communication channel and propose an optimal protocol for encoding and decoding the transmitted information. For this realistic protocol, we show that the maximum probability of successfully retrieving the information is equal to the maximum correlation that can be achieved between the system and the register.~The protocol uses only Gibbs states as free resources, ensuring minimal thermodynamic cost.~Based on this, we provide a thermodynamic interpretation of the Holevo information, which quantifies the capacity of the transmitted information and establishes a fundamental limit on its retrieval in thermodynamically constrained scenarios.
\end{abstract}

\maketitle

\section{Introduction}

Information is a physical quantity that must be encoded in a physical system to be processed, transferred, and measured \cite{landauer1961,Anderson2017}.~Consequently, it is subject to the physical laws that govern the system in which it is encoded. When information is stored in microscopic systems, it adheres to the principles of quantum physics. Phenomena such as coherence and non-locality enable intrinsic quantum protocols that have no classical counterparts, including quantum cryptography \cite{BennettBrassard84,Ekert91,bennett1992,GisinMassar1997} and quantum teleportation \cite{BennettEtAl93}. On the other hand, decoherence can rapidly degrade essential correlations \cite{zurek1991decoherence}. When thermodynamic principles are considered, thermal noise further challenges the long-term preservation of coherence \cite{xuerebprl}. For example, the {\it third law of thermodynamics} imposes fundamental constraints on the attainable purity of states in microscopic systems \cite{masanes2017general,wilming2017third,Taranto2023,taranto2025efficiently}, and creating or operating over coherences \cite{lostaglio15,lostaglio19, Gour22}. 
\begin{figure}[t]
    \centering
    \includegraphics[width=1\linewidth]{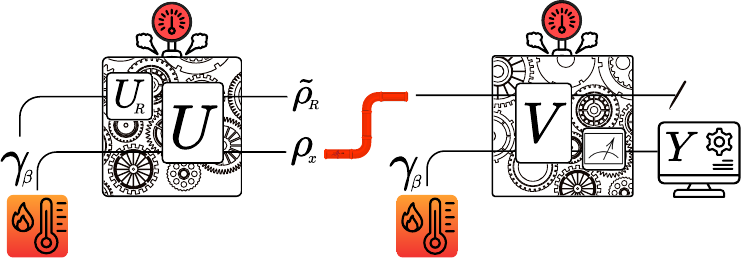}
     \caption{{\bf Thermodynamic Coding/Decoding Protocol:} Thermal states denoted by $\gibbs{\beta}$ are freely accessible through interaction with a thermal bath at inverse temperature $\beta$ and Hamiltonian $H$. The {\it sender} prepares a register state given by $\rho\R$ by applying unitary operations $U_R$ to a Gibbs state $\gibbs{\beta}$. The information $X$, encoded in $\rho\R$, is transmitted to the {\it receiver} via a unitary interaction $U$ among the register and a specified number of copies of the Gibbs state $\gibbs{\beta}$. The {\it receiver} attempts to decode the information by analyzing the statistical properties of the received states $\rho_x$.~The measurement process is inherently non-ideal, as the measurement pointer is also initialized in a Gibbs state $\gibbs{\beta}$. Heat is generated in proportion to the amount of encoded and decoded information throughout the encoding and decoding process.}
    \label{fig: protocol}
\end{figure}
 
Since unknown quantum states cannot be cloned \cite{wooterzurek82,Werner1998}, the information must be transmitted from one part to another. Even unknown classical information is subject to fundamental limitations on broadcasting \cite{CalevHen2008,daffertshofer2002classical}.  When thermodynamic constraints are considered, classical information cannot be fully transmitted to more than a single \textit{receiver} \cite{debarba2024}. Consequently, encoding classical information in quantum systems is advantageous because quantum coherence enhances the security of the transmitted information.~Non-orthogonality among quantum states introduces indistinguishability, which makes it more difficult for a malicious eavesdropper to intercept and extract the encoded information \cite{BennettBrassard84}. The \textit{sender} and \textit{receiver} must establish a predefined strategy for encoding and decoding information, to ensure that only the intended parties can access it. However, how should this strategy be designed when {thermal} noise affects the preparation, transmission, and measurement processes? Motivated by this question and its challenge, we explore the fundamental limitations on information encoding in quantum systems imposed by finite thermodynamic resources. These constraints are intrinsically linked to the third law of thermodynamics \cite{Taranto2023}, which governs the achievable purity of state preparation \cite{taranto2025efficiently} and the precision of measurement processes \cite{GuryanovaFriisHuber2018}. 

Recent advances have emphasized the increasing significance of thermodynamic considerations in quantum information processing. For instance, the thermodynamic cost of erasing information in quantum error correction processes (QEC) such as ancilla resets has been shown to cause heating, which is governed by Landauer's principle. This heating could potentially trigger dynamical phase transitions in the feasibility of fault-tolerant quantum computing~\cite{qec_landauer,Li15,Fellous21}. Conversely, QEC protocols have been modeled as autonomous thermal machines under feedback control, revealing energetic trade-offs that constrain information recovery~\cite{Danageozian22}.~Additionally, studies of small-scale thermodynamic irreversibility have demonstrated that coherence and finite-size effects can drive nontrivial entropy production~\cite{irreversibility_coherence}.~Indeed, recent surveys have shown how resource constraints, including thermodynamics, may influence the design of early fault-tolerant architectures~\cite{zhang2025,Kshirsagar2024}.~In Ref.~\cite{Biswas22}, the optimal encoding of classical information into quantum systems under constraints of finite thermodynamic resources is thoroughly analyzed. In that work, second-order asymptotic bounds were derived relating the achievable encoding rate to fluctuations in free energy.~Our approach adopts an ensemble-based operational perspective, focusing on the structural constraints imposed by the number of states available, the rank, and distinguishability. 

This work examines noisy communication channels where the noise originates from fundamental thermodynamic constraints.~We consider a scenario in which a \textit{sender} encodes a random variable, or message, into an ensemble of quantum states and transmits it to a \textit{receiver}, who aims to decode the information. Our framework involves preparing a register state and communicating through a finite-dimensional thermal reservoir at a finite temperature. Interactions are governed by unitary dynamics, as illustrated in Fig.~\ref{fig: protocol}.~The quantum communication channel becomes inherently noisy when the quantum states encoding the information are mixed \cite{schumacher1997sending}, thereby restricting the amount of information that can be compressed by the \textit{sender} \cite{horodecki1998limits} and retrieved by the \textit{receiver} \cite{Holevo1973}.  

Alternatively, Ref.~\cite{chiribella22} examines the inherent cost of using a non-equilibrium state for quantum information processing. This study advances the discussion by avoiding the explicit attribution of an energy cost to preparing the auxiliary system.~We assume that only thermal states are accessible and acknowledge that thermodynamic constraints are inherent and unavoidable in any realistic setting.~Unlike approaches based on catalytic or single-shot transformations, our analysis focuses on constraints at the ensemble level arising from protocols that can be physically implemented and are governed by thermal operations.~For example, Ref.~\cite{Narasimhachar2019} introduces memory capacity as a thermodynamic resource and analyzes how limited access to purification restricts quantum information tasks under low-temperature conditions. Refs.~\cite{Hsieh2025PRL, Hsieh2025PRA} develop a dynamic version of Landauer's principle, establishing a quantitative equivalence between the energy cost and the amount of classical information transmitted. These studies emphasize energetic trade-offs and transition feasibility between quantum states. Our framework, in contrast, investigates how thermodynamic restrictions shape the structure of ensembles prepared from Gibbs states, as well as how structural features such as rank subadditivity and limited distinguishability impose fundamental limitations on decoding success. 

In a scenario where the \textit{sender} and \textit{receiver} have access only to thermal states, it is demonstrated that the compression rate of an ensemble of pure states remains unaffected by these constraints. However, the fidelity of the decoded information is reduced \cite{Xuereb_2025}. According to the Schr\"{o}dinger-HJW theorems \cite{schrodinger1935mixture,HJW93}, any quantum ensemble can be generated from a given density matrix via unitary operations with rank-deficient orbits. This process also requires preparing and transforming pure states, which becomes infeasible when considering thermodynamic constraints. We circumvent this limitation by proposing a protocol that encodes information in an ensemble of quantum states through the interaction of resourceless thermal states \cite{Brandao13, Horodecki13}.~We thus present an optimal decoding protocol that maximizes the distinguishability of the states in the ensemble based on the preparation of ancillary thermal states.~The following assumptions underpin our proposed protocol: (1) only thermal states can be prepared, (2) a finite size thermal bath at finite temperature is freely accessible to both the \textit{sender} and the \textit{receiver}, and (3) the interaction and evolution operations are unitary. 

The main contributions of this work are as follows: (i) We demonstrate that monotonicity of rank implies thermodynamic constraints that limit state preparation for a given ensemble, (see Lemma~\ref{lemma: main rank inequality}); and (ii) We prove that encoding classical information into an ensemble of non-orthogonal pure states is impossible under such constraints (see Theorem~\ref{theorem: non-orthogonal ensembles}) and that any ensemble of $n$ linearly dependent $d\Sys$-dimensional states (with $n > d\Sys$) must consist of full-rank elements (see Theorem~\ref{theorem: mixture_theorem}). (iii) We present an optimal encoding and decoding protocol based on controlled unitaries.~This protocol shows that the success probability of distinguishing the encoded states is equal to the maximum achievable statistical correlation  (see Theorem~\ref{theorem optimal ensemble}). (iv) We provide a thermodynamic interpretation of Holevo information, showing that it is bounded above by the heat generated during the encoding process and proportional to the entropy difference relative to the Gibbs state.~Together, these results establish fundamental thermodynamic limits on preparing, transmitting, and recovering classical information in quantum communication settings.
 
This work is structured as follows: Section~\ref{sec: framework} introduces the quantum information communication framework and details the thermodynamic constraints and encoding protocol. Section~\ref{sec: main theorems} explores the limitations that thermodynamic noise imposes on the rank of encoded states.~Section~\ref{sec: optimal encoding} presents an optimal encoding/decoding protocol based on controlled operations.~Section~\ref{sec: holevo and heat} analyzes the thermodynamic relationships governing the amount of encoded information. Finally, Section~\ref{sec: conclusion} summarizes our findings and outlines future perspectives.

\section{Framework}\label{sec: framework}
\begin{figure*}[t]
    \centering
    \includegraphics[width=1.0\linewidth]{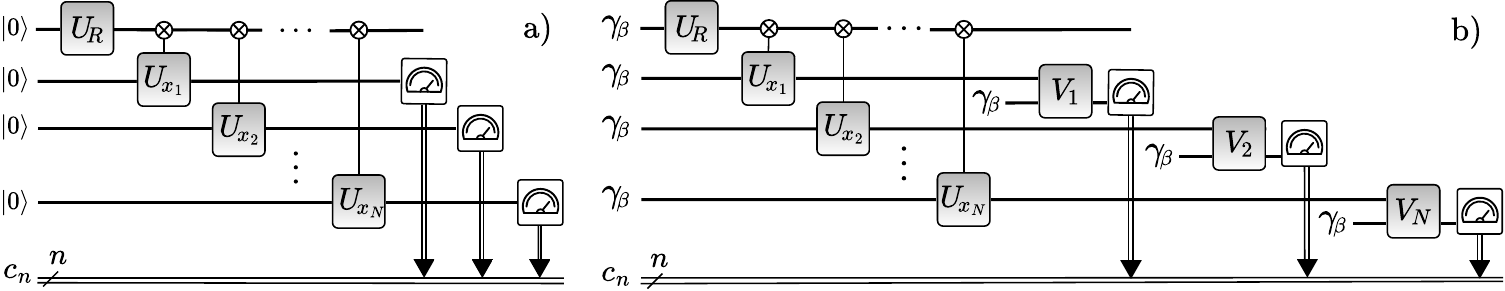}
    \caption{\textbf{Communication protocols}: The sketch shows a quantum communication protocol that is repeated $N$ times. In this protocol, the \textit{sender} prepares a register state by acting on a random unitary $U\R$. Then, in each $i-th$ round, the \textit{sender} encodes the information into the transmitted state through a controlled operation
     $U=\sum_{x_i}\ketbra{x_i}{x_i}\otimes U_{x_i}$, where $U_{x_i}$ indicates the action over the transmitted system. The \textit{receiver} then measures its apparatus and records the result on a $n$-dimensional classical tape, $c_n$. Fig. $2a)$ represents a perfect communication protocol, in which the \textit{sender} can prepare the register and the transmitted system at pure state $\ket{0}$. The \textit{receiver} can perform a perfect measurement. 
    Fig.~$2b)$ represents the noisy version of the protocol. In this version, {the \textit{sender} {prepares} the register and the transmitted system,} while the \textit{receiver} {prepares}  the measurement apparatus at a thermal state, $\gibbs{\beta}$, in contact with a thermal bath at an inverse temperature, $\beta$. }
    \label{fig: circuit}
\end{figure*}

A quantum communication protocol involves a \textit{sender} $A$ and a \textit{receiver} $B$, where   $A$ conveys information using a quantum channel  $\Lambda$ to transmit to $B$. Since information is a physical quantity \cite{landauer1961,Anderson2017}, the \textit{sender} $A$ must encode {the} information in a physical system. If the physical system is quantum, $A$ can transmit either quantum or classical information. This channel can be represented by the shared output state $\rho\AB$, and the amount of information that can be transferred from $A$ to $B$ is quantified by the mutual information of $\rho\AB$
\begin{eqnarray}\label{eq: mutual_info}
    I(A\!:\!B)_{\rho\AB} = S(\rho\A) + S(\rho\subB) - S(\rho\AB),
\end{eqnarray}
where $S(\rho)= - \tr(\rho\log\rho)$ is the von Neumann entropy of a given state $\rho$. 
If $\rho\AB$ is maximally entangled, $A$ and $B$ can use its non-locality to perform fully quantum information protocols, with no classical counterparts, as, for example, in quantum cryptography protocols \cite{BennettBrassard84,Ekert91,GisinMassar1997} or in quantum teleportation \cite{BennettEtAl93}. Moreover, $A$ can transmit a classical message $X$, encoded by a set of letters $\{x\}_{x=0}^{n-1}$,  with cardinality $|X| = n$. This message can be produced by an information source in which each letter is sampled by the random variable $X = \{x,\,p_x\}$, where $p_x$ characterizes the probability of generating a letter $x$ \cite{preskill16a}. 
For a quantum source, the information will be encoded in an ensemble of quantum states, denoted by $\Sigma = \{p_x,\,\rho_x\}_{x=0}^{n-1}$. On average, the state of the quantum system is described by the density matrix $ \trho\Sys = \sum_{x=0}^{n-1} p_x \rho_x $. Since the message is a classical entity, the quantum channel is represented as a process that converts a classical register $R$ into a quantum state $\rho\RS$, as
\begin{eqnarray}\label{eq: class_quantum}
    \rho\RS = \sum_x p_x \ketbra{x}{x}\R\otimes \rho_x,
\end{eqnarray}
where $R$ is an $n$-dimensional partition of $A$ and $S$ is a $d\Sys-$dimensional partition to be sent to $B$. As previously mentioned in Eq.~\eqref{eq: mutual_info}, the mutual information $I(R\!:\!S)_{\rho\RS}$ quantifies the amount of classical information that can be encoded in the ensemble $\Sigma$, represented by the so-called \textit{Holevo Information} \cite{Holevo1973} 
\begin{eqnarray}\label{eq: holevo_inf}
    \chi(\Sigma) = S\left(\sum_x p_x \rho_x\right) - \sum_x p_x S(\rho_x), 
\end{eqnarray}
with $I(R\!:\!S) = \chi(\Sigma)$ for $\rho\RS$ as presented in Eq.~\eqref{eq: class_quantum}. 
{If the ensemble is composed of non-orthogonal pure states $\ket{\psi_x}$, the Holevo information corresponds to the von Neumann entropy of the system $\rho\Sys$, \textit{i.e.,} $\chi(\Sigma)=~S(\sum_x p_x\ketbra{\psi_x}{\psi_x})$. Moreover, if the states $\rho_x$ are pure and orthogonal, $A$ is solely communicating classical information, as $\rho\RS$ exhibits only classical correlations, and the Holevo information attains the Shannon entropy of $X$, \textit{i.e,} $\chi(\Sigma)= H(X) = -\sum_x p_x\log p_x$ \cite{OllivierZurek01}.} 
If the states $\rho_x$ are mixed, the communication channel is considered noisy, as the pure states $\ketbra{\psi_x}{\psi_x}$ are mapped to mixed states $\rho_x$, see Ref. \cite{schumacher1997sending}. 

The probability of success in distinguishing the states of the ensemble is quantified as
\begin{equation}\label{eq: prob_success}
    \psucc = \max_{\substack{0\leq P_x\leq \id, \\ \sum_x P_x =\id}}\left(\sum_x p_x\tr(\rho_xP_x)\right),
\end{equation}
where $P_x$ are the elements of a \textit{Positive Operator-Valued Measure} (POVM). If the states on the ensemble are distinguishable, {there} exists a POVM that can perfectly discriminate them and $\psucc=1$ \cite{ivanovic1987,Chefles2000,Roa2002}.

\subsection{Noiseless Communication Protocol}

A transformation $\Lambda:\M{B}(\M{H})\rightarrow\M{B}(\M{H}')$ that maps a quantum state into another is called a completely positive and trace preserving (CPTP) map \cite{kraus1971}.~In general,~CPTP maps can be expressed as a unitary interaction between the state being transformed and an auxiliary system $E$ prepared in a known pure state, represented as $\ket{0}$, \cite{choi1975} 
\begin{eqnarray}\label{eq: cptp dilation}
    \Lambda(\rho) = \tr\subtiny{0}{0}{E\pr}(U \rho\otimes\ketbra{0}{0} U^{\dagger}),
\end{eqnarray}
where $\M{H}\subtiny{0}{0}{E\pr}\subseteq \M{H}\otimes\M{H}\E$ is a fraction of the entire system {plus} environment, dependent on the output Hilbert space $\M{H}\pr$. To encode information into the ensemble $\Sigma$, the map $\Lambda$ should operate over a diagonal state $\rho\R = \sum_xp_x\ketbra{x}{x}$, resulting $\rho\RS$ in Eq.~\eqref{eq: class_quantum}.~There exists a so-called controlled unitary $U = \sum_x \ketbra{x}{x}\otimes U_x$, with set $\{U_x\}$ composed of unitaries, resulting  
\begin{eqnarray}\label{eq: ideal dilation ensemble}
    \tr\subtiny{0}{0}{E\pr}(U \rho\R\otimes\ketbra{0}{0}\E U^{\dagger}) =  \sum_x p_x \ketbra{x}{x}\otimes \rho_x,
\end{eqnarray}
with $\rho_x \equiv \tr\subtiny{0}{0}{S\pr}(U_x\ketbra{0}{0}U_x^{\dagger})\in\M{B}(\M{H}\Sys)$, where $\M{H}\subtiny{0}{0}{S\pr}\subseteq \M{H}\Sys\otimes\M{H}\E$ \cite{HJW93}. 

Fig.~\ref{fig: circuit}$a)$ illustrates the discussed protocol.~A register state encodes the random variable  $X$ in its diagonal, represented by the density matrix $\rho\R = \sum_x p_x\ketbra{x}{x}$, under the action of a random unitary operation $U\R$ over a pure state $\ket{0}$. This register state interacts with another system prepared in the known pure state  $\ket{0}$ via a controlled unitary interaction $U$, to obtain the state presented in Eq.~\eqref{eq: ideal dilation ensemble}.~The system is then transmitted to the \textit{receiver}, who measures it to distinguish the states and retrieve the information $X$.

From the above expression Eq.~\eqref{eq: ideal dilation ensemble}, it is evident that any desired ensemble can be achieved with sufficient knowledge and precise control.~However, what limitations exist on ensembles encoding a random variable $X$ when systems $A$ and $B$ are governed by thermodynamic laws? More specifically, what constraints arise when only finite thermodynamic resources are available, such as access to mixed states, thermal baths at finite temperatures, coherence-preserving operations, or a limited ability to generate quantum correlations?

\subsection{Noisy Communication Protocol}\label{sec:noisyprotocol}

The laws of thermodynamics impose restrictions on the preparation \cite{Taranto2023}, operation \cite{FaistRenner2018}, and measurement process \cite{GuryanovaFriisHuber2018} of quantum systems. Specifically, the third law states that only full-rank states can be prepared with a finite amount of thermodynamic resources  \cite{wilming2017third}. 
Considering that the auxiliary system is in contact with a thermal bath and thermalizes to a Gibbs state at temperature $T = 1/\beta$ (considering $k_B = 1$), with finite-dimensional Hamiltonian $H= \sum_{i}E_i \ketbra{E_i}{E_i}$ 
\begin{equation}\label{eq: thermal state}
    \gibbs{\beta} = \frac{\exp{(-\beta H)}}{Z},
\end{equation}
where $ Z = \tr(\exp(-\beta H)) $ is the partition function.  Consequently, in thermodynamic scenarios, the set of CPTP maps over states is usually limited to those that preserve the Gibbs state. We can define a finite-resource process, dependent on $ \gibbs{\beta} $ and {$\Lambda\subtiny{0}{0}{\beta}$} over a given state $ \rho $, as follows:
\begin{eqnarray}\label{eq: thermo cptp dilation}
    \Lambda\subtiny{0}{0}{\beta}(\rho) = \tr\subtiny{0}{0}{E\pr}(U \rho\otimes\gibbs{\beta} U^{\dagger}).
\end{eqnarray}

The transformation, denoted as $\Lambda\subtiny{0}{0}{\beta}(\rho)$, satisfies what we call a finite resource protocol and can be interpreted as a thermodynamically noisy version of Eq.~\eqref{eq: cptp dilation}.~Since our focus is on the thermodynamic aspects of information encoding, we consider a model in which the encoding process is implemented through a global unitary interaction between the information register, represented by the density matrix $\rho\R$ and a thermal auxiliary system, initialized in the Gibbs state $\rho\Sys = \gamma_\beta $.~The partial trace will be taken to obtain the desired post-interaction states. Hence, the post-interaction state $\trho\RS$ is given by the action of the unitary $U$, as follows
\begin{equation}\label{eq: unitary interaction}
    \trho\RS = U \left(\rho\R\otimes\rho\Sys\right) U^{\dagger}.
\end{equation}
Note that this assumption is not restrictive, even though the protocol is described using unitary operations. Within the framework of thermodynamic resource theories \cite{Brandao13, Horodecki13}, Gibbs states are considered free, and introducing ancillary systems in thermal states incurs no cost. Including such ancillas in the global unitary and tracing them out allows for the realization of more general quantum operations (\textit{i.e.}, CPTP maps), as guaranteed by the Stinespring dilation theorem (see \cite{Stinespring55}).~Consequently, the formalism naturally extends to open-system dynamics built from thermodynamically free resources.

Fig.~\ref{fig: circuit}$b)$ illustrates the thermodynamic noisy encoding process, as detailed in App.~\ref{app_sec: encoding decoding protocol}. In summary, it consists of: 
\begin{enumerate}
    \item[(\ref{app_subsubsec: register_preparation})] \textbf{Encoding the classical variable on $\rho\R$}: The \textit{sender} has free access to thermal states $\gibbs{\beta}$ and prepares the register by encoding the classical random variable $X$ in its diagonal through the application of a Haar-random unitary $ U\R $. The resulting state is given by $\rho\R = U\R \gibbs{\beta} U\R^{\dagger}$, in the form
    \begin{equation}
    \rho\R = \sum_{x=0}^{n-1} p_x \ketbra{x}{x} + \rho^{\text{off}}\R,
    \end{equation}
    Here, the off-diagonal terms are denoted by $ \rho^{\text{off}}\R=\sum_{x\neq y}\bra{x}\rho\R\ket{y}$.
\item[(\ref{app_subsubsec: system_preparation})]\textbf{Preparing the system state $\rho\Sys$}:
    The system is prepared in a thermal state, defined as $\rho\Sys =\gibbs{\beta}$. For a random variable with cardinality $n$, the system's Hilbert space can be decomposed into $n$ subspaces, such that $\M{H}\Sys = \M{H}_0 \oplus \cdots \oplus \M{H}_{n-1}$, where each letter $x$ of $X$ is encoded in a corresponding $d_x-$dimensional subspace given by $\M{H}_x$.~This partitioning of the Hilbert space satisfies the condition $d\Sys = \sum_{x=0}^{n-1} d_x$.
\item[(\ref{app_subsubsec: encoding})]\textbf{Transferring the information}
    The transfer and encoding of information from the register to the system occurs through a controlled unitary interaction of the following form 
    \begin{equation}
    U = \sum_x \ketbra{x}{x}\R \otimes U_x,
    \end{equation}
    thereby generating the ensemble $ \Sigma = \{p_x, \rho_x\} $ in the system, in a given measurement basis. 
 \item[(\ref{app_subsubsec: decoding})]\textbf{Decoding the information}
    The \textit{receiver} decodes the information by measuring the average state $ \trho\Sys = \sum_x p_x \rho_x $. This is achieved by coupling $ \rho_x $ to a pointer system, initialized in the thermal state $ \gibbs{\beta} $, through an interaction $V$ that maximizes $\psucc$ as given in Eq.\,\eqref{eq: prob_success}.
\end{enumerate}
Since the diagonal of the register state $ \rho\R$ remains unchanged throughout the protocol, the process can be repeated as many times as necessary for reliable transmission and recovery of information.

In Section~\ref{sec: main theorems}, we present the limitations of preparing the ensemble $\Sigma$ based on the unitary encoding as described in Eq.~\eqref{eq: unitary interaction}. This is our only assumption for now. In Section~\ref{sec: optimal encoding}, we discuss in detail the ``thermodynamic-friendly" encoding and decoding strategies that maximize the distinguishability of the $\rho_x$, based on the above-described protocol.

\section{Encoding limitations using finite thermodynamics resources} \label{sec: main theorems}

In this section, we show how restrictions on the rank of the input state affect the ensemble generation process. Specifically, the product of the ranks of the system and the register cannot exceed a scaled version of the ensemble's maximum rank. This implies that increasing the number of available states in the ensemble or enhancing their distinguishability requires additional resources in the system or the register.~We derive fundamental bounds on the rank of states within an ensemble and prove Theorem~\ref{theorem: non-orthogonal ensembles}, which states that it is impossible to encode a classical random variable $X$ into a set of pure quantum states using finite thermodynamic resources.  Theorem~\ref{theorem: mixture_theorem} formalizes the rank condition that must be satisfied when the ensemble consists of linearly dependent states. This theorem establishes that under finite resources, the states in the ensemble must all have the same rank. Specifically, if the states are linearly dependent, then each must have rank equal to the dimension of the system, ensuring that all states in the ensemble are full-rank. This result imposes strict conditions on preparing quantum ensembles and highlights thermodynamic limitations when encoding classical information into quantum states under these constraints.

The proof of Lemma~\ref{lemma: main rank inequality} relies on the well-known sub-additivity property of rank, which states that the rank of a bipartite quantum state cannot exceed the product of the ranks of its marginal states (see Ref.~\cite{Cadney2013}). While the mathematical structure of our result is related to this property, its operational meaning is novel in this context.

\begin{lemma}[Ensemble Rank Relation]\label{lemma: main rank inequality}
Consider a register state $\rho\R$, which encodes a random variable $X$ with cardinality $n$ in its diagonal elements $p_x$. This register unitarily interacts with a system prepared in the state $\rho\Sys$. If the post-interaction state of the system is given by the averaged state $\trho\Sys = \sum_{x=0}^{n-1} p_x \rho_x$, the rank of the states $\rho_x$ will satisfy
\begin{equation}\label{eq: main rank inequality}
  \text{rank}(\rho\Sys) \cdot \text{rank}(\rho\R) \leq n \cdot \text{rank}(\rho_{\text{max}}),
\end{equation}
where $\rho_{\text{max}}$ is the state with the highest rank within the set $\{\rho_x\}_{x=0}^{n-1}$.
\end{lemma}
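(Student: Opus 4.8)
The plan is to bound the rank of the joint register--system state in two independent ways and compare them. First I would exploit that rank is multiplicative over tensor products and invariant under unitary conjugation. Since the post-interaction state is $\trho\RS = U(\rho\R\otimes\rho\Sys)U\da$, this gives
$\rank{\trho\RS} = \rank{\rho\R\otimes\rho\Sys} = \rank{\rho\R}\cdot\rank{\rho\Sys}$,
so the left-hand side of \eqref{eq: main rank inequality} is realized as an exact rank, with no inequality spent yet.

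Next I would extract the ensemble structure. The encoded information lives in the classical--quantum form $\sum_x p_x\ketbra{x}{x}\R\otimes\rho_x$, which is block-diagonal in the register pointer basis $\{\ket{x}\}$, so its rank is \emph{exactly} the sum of the block ranks, $\sum_{x:\,p_x>0}\rank{\rho_x}$. I would recover this state from $\trho\RS$ by dephasing (pinching) the register in the $\{\ket{x}\}$ basis, identifying the sub-normalized conditional operators $p_x\rho_x=\bra{x}\trho\RS\ket{x}$ on $\M{H}\Sys$, so that tracing out $R$ returns the stated average $\trho\Sys = \sum_x p_x\rho_x$. The crucial fact is that pinching onto mutually orthogonal blocks never decreases the rank: the support $V$ of $\trho\RS$ satisfies $V\subseteq\bigoplus_x \Pi_x V$ with $\Pi_x=\ketbra{x}{x}\R\otimes\id$, whence $\rank{\trho\RS}=\dim V\leq\sum_x\dim(\Pi_x V)=\sum_x\rank{\rho_x}$. (If the register is already diagonal this is an equality and the pinching step is unnecessary.)

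Finally I would close the estimate with the trivial observation $\sum_{x=0}^{n-1}\rank{\rho_x}\leq n\cdot\rank{\rho_{\text{max}}}$, since each of the $n$ summands is at most the maximal rank. Chaining the three steps yields $\rank{\rho\R}\cdot\rank{\rho\Sys}=\rank{\trho\RS}\leq\sum_x\rank{\rho_x}\leq n\cdot\rank{\rho_{\text{max}}}$, which is precisely \eqref{eq: main rank inequality}.

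The step I expect to be the main obstacle is the middle one: arguing that passing from $\trho\RS$ to the block-diagonal classical--quantum state does not destroy rank, and that the resulting $\rho_x$ coincide with the ensemble members in $\trho\Sys$ (which holds because the register diagonal $p_x$ is preserved by the interaction). Invoking only the plain sub-additivity bound $\rank{\trho\RS}\leq\rank{\trho\R}\cdot\rank{\trho\Sys}$ is \emph{not} enough for the sharp constant: combined with $\rank{\trho\Sys}\leq\sum_x\rank{\rho_x}$ it would cost an extra factor of $n$ and give only $n^2\cdot\rank{\rho_{\text{max}}}$. The tight linear factor $n$ instead rests on the exact rank-additivity of the block-diagonal structure together with the monotonicity of rank under the register pinching, so the care lies in handling possible register coherences rather than assuming an already-diagonal $\rho\R$.
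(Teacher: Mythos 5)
Your proof is correct and follows essentially the same route as the paper's Appendix~A: unitary invariance plus tensor multiplicativity give $\rank{\trho\RS}=\rank{\rho\R}\cdot\rank{\rho\Sys}$, a register dephasing cannot decrease the rank, and the block structure of the resulting classical--quantum state yields the factor $n\cdot\rank{\rho_{\text{max}}}$. If anything, you are more careful at the two points where the paper is loose: you justify the pinching step via the support inclusion $V\subseteq\bigoplus_x\Pi_x V$ (the paper's blanket claim that a stochastic map ``can only increase the rank'' is false for general channels, though true for pinchings), and you correctly use block-rank additivity $\rank{\trho\RS^{\Pi}}=\sum_x\rank{\rho_x}$, whereas the paper writes this as an equality with $\rank{\sum_x p_x\rho_x}$, which in general is only a lower bound on the block-diagonal rank --- a slip that does not affect the final inequality.
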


\noindent \textit{Proof sketch}: The total rank is invariant under unitaries and $\text{rank}(\sum_{i=0}^{n-1}p_x\rho_x)\leq n~ \text{rank}(\rho_{\text{max}})$. The complete proof is presented in Appendix~\ref{app_sec: encoding limitations}.\\

Thus, Lemma~\ref{lemma: main rank inequality} translates a structural mathematical relationship into a thermodynamic constraint on information encoding under finite-resource assumptions. Therefore, the rank limitation imposed by the third law of thermodynamics shows that it is impossible to encode classical information into a set of non-orthogonal pure states. This result follows directly from the rank constraint: if all states in the ensemble are pure, that is, $\rho_x = \ketbra{\phi_x}{\phi_x}$, then each of them necessarily has rank one. This leads to a contradiction with Lemma \ref{lemma: main rank inequality} when both the system and the register are full-rank states.
 Physically, this implies that any ensemble encoding under finite resources must inherently involve mixed states. This conclusion is consistent with well-established results in quantum thermodynamics, where the preparation of pure states typically incurs an infinite thermodynamic cost, as the third law of thermodynamics dictates  \cite{Taranto2023}. Next, Theorem~\ref{theorem: non-orthogonal ensembles} states the informational limitations imposed by Lemma~\ref{lemma: main rank inequality}.

\begin{theorem}[No-go pure states ensemble]\label{theorem: non-orthogonal ensembles}
Considering the scenario proposed in Lemma \ref{lemma: main rank inequality}, there is no finite resource (both system and register must be full-rank) protocol capable of encoding a random variable $X$ into an ensemble of non-orthogonal pure states.
\end{theorem}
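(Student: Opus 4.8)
The plan is to derive a contradiction from Lemma~\ref{lemma: main rank inequality} under the finite-resource hypotheses. First I would fix notation: the register $\rho\R$ has dimension $n$ (the cardinality of $X$) and the system $\rho\Sys$ has dimension $d\Sys$. The finite-resource assumption invokes the third law of thermodynamics as discussed after the lemma: only full-rank states can be prepared from Gibbs states with finite thermodynamic cost. Hence both marginals are full rank, giving $\text{rank}(\rho\R)=n$ and $\text{rank}(\rho\Sys)=d\Sys$. Substituting these into the left-hand side of Eq.~\eqref{eq: main rank inequality} yields $n\cdot d\Sys \leq n\cdot \text{rank}(\rho_{\text{max}})$, which after dividing by $n$ forces $\text{rank}(\rho_{\text{max}}) \geq d\Sys$, i.e.\ the highest-rank ensemble element must itself be full-rank on $\M{H}\Sys$.

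The second step is to confront this with the pure-state hypothesis. If every $\rho_x=\ketbra{\phi_x}{\phi_x}$ is pure, then $\text{rank}(\rho_x)=1$ for all $x$, so in particular $\text{rank}(\rho_{\text{max}})=1$. Combined with the inequality $\text{rank}(\rho_{\text{max}})\geq d\Sys$ just derived, this requires $d\Sys\leq 1$, which is degenerate: a one-dimensional system cannot carry more than a single state and cannot encode a nontrivial random variable $X$ with $n\geq 2$ letters. For any genuine communication scenario $d\Sys\geq 2$, so we reach the contradiction $2\leq d\Sys \leq \text{rank}(\rho_{\text{max}})=1$, which is absurd. This rules out any finite-resource unitary protocol of the form in Eq.~\eqref{eq: unitary interaction} producing an ensemble of pure states.

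I would then tighten the statement to the \emph{non-orthogonal} case emphasized in the theorem. The above already excludes \emph{all} pure-state ensembles under full-rank marginals, so the non-orthogonal case follows a fortiori; but I would remark on why non-orthogonality is the physically relevant restriction. Orthogonal pure states could in principle be carried by a channel that merely permutes orthogonal subspaces, but the thermodynamic input $\gibbs{\beta}$ is full rank and a unitary preserves rank, so the post-interaction global state $\trho\RS$ has rank $n\cdot d\Sys$ (the product of full-rank marginals), and no partial trace can reduce the average $\trho\Sys$ to a convex combination of rank-one projectors without violating the counting in the lemma. I would make explicit that the conclusion is driven entirely by rank monotonicity plus the third-law full-rank condition, so the result is agnostic to the specific unitary $U$ chosen.

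The main obstacle is not the algebra, which is a one-line substitution, but justifying rigorously that ``finite resources'' forces both marginals to be exactly full rank rather than merely high rank. The cleanest route is to invoke the cited third-law results \cite{wilming2017third,Taranto2023}: any state preparable from a thermal bath at finite $\beta$ with finite work/time has strictly positive eigenvalues, hence full rank. A subtle point I would check is that the register $\rho\R=U\R\gibbs{\beta}U\R^{\dagger}$, being unitarily conjugate to a Gibbs state, inherits full rank automatically, and that the controlled-unitary encoding preserves this; thus the full-rank hypothesis is a genuine consequence of the protocol rather than an extra assumption. Care must also be taken that $\text{rank}(\rho_{\text{max}})$ refers to the maximal-rank element and that the inequality in Lemma~\ref{lemma: main rank inequality} is applied in the correct direction, so that the pure-state assumption $\text{rank}(\rho_x)=1$ collides with $\text{rank}(\rho_{\text{max}})\geq d\Sys\geq 2$.
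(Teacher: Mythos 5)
Your proof is correct and takes essentially the same approach as the paper: both substitute the pure-state condition $\rank{\rho_{\text{max}}}=1$ and the third-law full-rank hypothesis into Lemma~\ref{lemma: main rank inequality}, the paper rearranging to $\rank{\rho\Sys}\leq n/\rank{\rho\R}=1$ and contradicting the third law, while you equivalently derive $\rank{\rho_{\text{max}}}\geq d\Sys\geq 2$, a contradiction with purity. Your explicit treatment of the degenerate case $d\Sys=1$ is a minor tightening that the paper's version leaves implicit.
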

\begin{proof}
   From Lemma \ref{lemma: main rank inequality}, if the states in the ensemble are pure, i.e., $\rho_x = \ketbra{\phi_x}{\phi_x}$ for all $x \in \{0,\ldots,d_{\text{R}}-1\}$, the rank of $\rho_{\text{max}}$ is 1. Consequently, the rank of the system state must satisfy    
    \begin{equation}
        \rank{\rho\Sys} \leq \frac{n}{\rank{\rho\R}} = 1.
    \end{equation}
Thus, the system must be prepared in a pure state. This contradicts the assumption of finite resources due to the third law of thermodynamics.
\end{proof}

This no-go result directly illustrates the third law of thermodynamics in the context of quantum information processing. As demonstrated in Ref.~\cite{Taranto2023}, creating pure states from a thermal state at a finite temperature requires the use of infinite thermodynamic resources. Our theorem formalizes this unattainability within a communication protocol. When both the register and the system are restricted to full-rank thermal states, any attempt to generate an ensemble of non-orthogonal pure states will fail due to rank limitations. This operationalizes the third-law constraint as a structural limit on information encoding, whereby only mixed-state ensembles can be prepared with finite resources. Thus, our result connects the resource-theoretic interpretation of the third law to particular restrictions on preparing informationally rich quantum ensembles.

In Ref.~\cite{schrodinger1935mixture}, Schr\"odinger, in its well-known {\it mixture theorem}, claims that a given $d\Sys$-dimensional diagonal density matrix $\rho=\sum_l \lambda_l \ketbra{\lambda_l}{\lambda_l}$ can always be be written as a mixture of pure states $\rho\Sys\pr =\sum_{k=0}^{n-1}p_k,\ketbra{\psi_k}{\psi_k}$, for any finite $n\geq d\Sys$, if there exist a logical unitary operation $U$ such that $p_k\ket{\psi_k}=\sum_l u_{k,l}\sqrt{\lambda_l}\ket{\lambda_l}$, where $u_{k,l}$ are the matrix elements of $U$. Independently, in Ref.~\cite{HJW93}, the so-called {\it HJW-Theorem} states that there is always a unitary $U$ in this form, with $n-d\Sys$ zero columns. They also show that the ensemble can be obtained by the local action of $U$ on the purified state $\ket{\rho}=\sum_l \sqrt{\lambda_l} \ket{\lambda_l}\ket{\lambda_l}$. Moreover, the unitary $U$ does not act on the Hilbert space, it acts over the list of $\sqrt{\lambda_l} \ket{\lambda_l}$ to obtain another list $\sqrt{p_k},\ket{\psi_k}$, which cannot be experimentally implemented. Also, the unitary orbit is not full-rank because it has a kernel subspace \cite{zycbook}. This implies that the action of those unitaries on full-rank states would result in purified subspaces, which violates the premise of the third law of thermodynamics. Therefore, when the ensemble consists of linearly dependent states, with $n > d\Sys$, and is subject to finite resource constraints, it must satisfy a strict rank condition. The following theorem formally expresses this condition and characterizes the rank of the states in the ensemble. It establishes that, under the given constraints, the states must all be of full rank.

\begin{theorem}[Thermodynamic Mixture Theorem for Linearly Dependent States]\label{theorem: mixture_theorem}
   Under finite resources, if the states of the ensemble 
   $\Sigma = \{p_x,\rho_x\}_{x=0}^{n-1}$ are linearly dependent, they share the same rank. Therefore, they must all be full-rank  
   \begin{equation}
   \rank{\rho_x} = d\Sys, \quad \forall\, x \in \{0,\ldots,n-1\},
   \end{equation} 
   where $d\Sys$ is the dimension of ${\rho_x}$.
\end{theorem}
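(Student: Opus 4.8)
The plan is to convert the finite-resource hypothesis into hard rank data, feed it through Lemma~\ref{lemma: main rank inequality} to secure one full-rank member, and then use the global unitarity of the encoding to propagate full rank to \emph{every} member; linear dependence enters only to fix the common value of that rank at $d\Sys$.

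First I would record the input ranks. The third law forces both inputs to be full-rank, so $\rank{\rho\R}=n$ and $\rank{\rho\Sys}=d\Sys$. Substituting into Lemma~\ref{lemma: main rank inequality}, the bound $\rank{\rho\Sys}\rank{\rho\R}\leq n\,\rank{\rho_{\text{max}}}$ becomes $n\,d\Sys\leq n\,\rank{\rho_{\text{max}}}$, i.e.\ $\rank{\rho_{\text{max}}}\geq d\Sys$; since each $\rho_x$ is supported on the $d\Sys$-dimensional space $\M{H}\Sys$, this forces $\rank{\rho_{\text{max}}}=d\Sys$. Thus the inequality already delivers one full-rank member for free.

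Next I would expose the individual conditional states operationally. Writing the register rows of the interaction unitary as $M_x=(\bra{x}\R\otimes\id\Sys)U$, unitarity gives $\sum_x M_x\da M_x=\id$ on the $n\,d\Sys$-dimensional space, and the members read $p_x\rho_x=M_x(\rho\R\otimes\rho\Sys)M_x\da$. Because $\rho\R\otimes\rho\Sys$ is full-rank, hence invertible, one has $p_x\rho_x=(M_x\sqrt{\rho\R\otimes\rho\Sys})(M_x\sqrt{\rho\R\otimes\rho\Sys})\da$, so $\rank{\rho_x}=\rank{M_x}$ for every $x$. Finally, applying rank subadditivity (the property underlying Lemma~\ref{lemma: main rank inequality}) to $\sum_x M_x\da M_x=\id$ gives $n\,d\Sys=\rank{\id}\leq\sum_x\rank{M_x}\leq n\,d\Sys$, so equality holds throughout and every $\rank{M_x}=d\Sys$. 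Hence all $\rho_x$ are full-rank, which in particular means they share the common rank $d\Sys$ asserted by the theorem, with the linear-dependence hypothesis (and the regime $n>d\Sys$) marking precisely the situation in which one would otherwise expect a rank-deficient realization in the Schr\"odinger--HJW sense.

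I expect the main obstacle to be the per-state step rather than the global count: a diagonal block $p_x\rho_x$ of the full-rank joint state $\trho\RS$ could, viewed in isolation, be rank-deficient with the off-diagonal register blocks compensating, so full rank of the members is \emph{not} automatic from full rank of $\trho\RS$ alone. The identity $\sum_x M_x\da M_x=\id$ together with invertibility of $\rho\R\otimes\rho\Sys$ is exactly what forbids such compensation and upgrades ``one member is full-rank'' (from Lemma~\ref{lemma: main rank inequality}) to ``all members are full-rank.'' This is also the point that ties back to the preceding HJW discussion: realizing a linearly dependent ensemble through any rank-deficient $\rho_x$ would demand a mixing unitary with a nontrivial kernel, i.e.\ the purification of a subspace, contradicting the third law and thereby generalizing Theorem~\ref{theorem: non-orthogonal ensembles} from pure to merely rank-deficient members.
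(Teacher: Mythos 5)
Your proof is correct, but it takes a genuinely different route from the paper's. The paper argues in three steps: full-rankness of the average state $\trho\Sys$; an assertion that linear dependence forces all ensemble members to share a common rank, encoded in the identity $\rank{\sum_x p_x\rho_x}=\sum_x p_x\rank{\rho_x}$ (which is not a valid rank identity in general and really just restates the equal-rank, equal-support claim); and finally the Lemma~\ref{lemma: main rank inequality} bound $\rank{\rho_x}\geq\rank{\rho\R}\,d\Sys/n\geq d\Sys$ using $d\R\geq n$. Note that the lemma by itself only bounds $\rank{\rho_{\text{max}}}$, so the paper needs that equal-rank step to reach \emph{every} member --- precisely the gap you identified and then closed operationally. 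Your Kraus-row argument, with $M_x=(\bra{x}\R\otimes\id)U$, $p_x\rho_x=M_x(\rho\R\otimes\rho\Sys)M_x\da$, $\rank{\rho_x}=\rank{M_x}$ by invertibility of the input, and the count $\rank{\id}\leq\sum_x\rank{M_x}\leq n\,d\Sys$, is rigorous and, notably, never uses linear dependence: it shows that for \emph{any} unitary encoding of full-rank inputs all conditional states are full-rank, which is stronger than the theorem as stated and also makes your preliminary Lemma-1 step redundant. Two small remarks. First, you set $\rank{\rho\R}=n$, i.e.\ $d\R=n$; for $d\R\geq n$ as in the paper, run the same count over the full register basis, $\sum_{x=0}^{d\R-1}\rank{M_x}\geq d\R\,d\Sys$ with $\rank{M_x}\leq d\Sys$, and nothing changes. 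Second, the obstacle you anticipated --- a rank-deficient diagonal block compensated by off-diagonal register blocks --- cannot actually occur here: $\trho\RS=U(\rho\R\otimes\rho\Sys)U\da$ is positive \emph{definite}, and every principal block $\bra{x}\trho\RS\ket{x}=p_x\rho_x$ of a positive definite matrix is itself positive definite, which gives a one-line version of your $M_x$ computation. In sum, the paper's route keeps linear dependence center stage and stays within Lemma~\ref{lemma: main rank inequality}, at the cost of an equal-rank step that is asserted rather than proved; yours trades that for an elementary, airtight argument and a strictly more general conclusion.
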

\begin{proof}
Given that we assume the system is initially prepared in a full-rank state, the rank of the average state must be full-rank since the ensemble consists of linearly dependent states. Thus, we have:
\begin{equation}
    \rank{\sum_x p_x \rho_x} = \sum_x p_x \rank{\rho_x}.
\end{equation}
On the other hand, from Lemma~\ref{lemma: main rank inequality}, we obtain the following bound:
\begin{equation}
    \rank{\rho_x} \geq \rank{\rho\R} \frac{d\Sys}{n}.
\end{equation}
Since we assume $d\R \geq n$, it follows that
\begin{equation}
    \rank{\rho_x} \geq d\Sys.
\end{equation}
Concluding the proof. 
\end{proof}

We refer to this result as a \textit{thermodynamic mixture theorem for linearly dependent states}, which can be interpreted as a \textit{noisy version} of the Schr\"{o}dinger-HJW Mixture Theorem~\cite{schrodinger1935mixture,HJW93}. It describes how, in the presence of noise and finite resources, different convex decompositions of the same mixed state become operationally indistinguishable. Unlike the original mixture theorem, which assumes the preparation of pure states and allows the action of orbit unitaries on supports of arbitrary rank, the thermodynamic mixture theorem presented here does not rely on the preparation of pure states nor the availability of unitaries acting on non-full-rank supports. These restrictions arise naturally from the thermodynamic limitations associated with finite resources.  In particular, unitary operations that act on restricted-support (i.e., non-full-rank) states cannot be accessed without exceeding the finiteness of resources, as they can purify mixed states.

Moreover, in the special case where the ensemble consists of orthogonal pure states, the thermodynamic constraints manifest themselves through the rank of the average state $\trho\Sys$, which quantifies the maximal statistical correlations accessible within the allowed energy and entropy budgets.

\begin{remark}[{Orthogonal pure states}]\label{remark orthogonal pure states}
If the ensemble is composed by orthogonal pure states $\Sigma = \{p_x,\ketbra{\phi_x}{\phi_x}\}_{x=0}^{n-1}$, resulting $\trho\Sys = \sum_x p_x \ketbra{\phi_x}{\phi_x}$. In this case $\rank{\rho_x} = 1$, and $\trho\Sys = V\rho\Sys V\da$, implying that 
    \begin{equation}
        \rank{\trho\Sys} = \rank{\rho\Sys} = n \leq d\Sys.
    \end{equation}
\end{remark}

Unlike in the non-orthogonal case, where indistinguishability results from rank constraints, orthogonal pure states allow for the complete retrieval of all encoded information in the ensemble without thermodynamic constraints. This is because orthogonal states are perfectly distinguishable, enabling the Holevo information quantity to reach its maximum value, which corresponds to the Shannon entropy of the classical variable $X$.~Consequently, orthogonal ensembles achieve maximum distinguishability and optimal information encoding efficiency while remaining consistent with the constraints imposed by finite thermodynamic resources.

\section{Encoding and decoding information with controlled operations} \label{sec: optimal encoding}

The initialization of the register state incurs a cost in thermodynamic resources, for example, through the action of a finite resource operation as described in Eq.~\eqref{eq: thermo cptp dilation}.~Preparing the register state can be time-consuming and energetically costly, and may introduce unaccounted sources of errors. As presented in Section\,\,\ref{sec:noisyprotocol}, here, we consider that the preparation of the register state is achieved by applying a random unitary operation, designated as $U\R$. This operation encodes the random variable $X$ in the diagonal of the register state. {Note that this process cannot be repeated because $U\R$ is
sampled at random. Here, the register is prepared once by sampling a Haar-random $U_R$ that encodes X in its diagonal. The sampling of $U\R$ is a one-off step that is not reused; however, the subsequent controlled-$U$ interactions preserve the register diagonal and can be repeated arbitrarily for multiple rounds of transmission and measurement.} Inevitably, the \textit{sender} and \textit{receiver} may need to repeat the protocol multiple times to ensure the statistical reliability of the transmitted information. Designing a communication protocol that preserves the information encoded in the diagonal elements of the register state is crucial to mitigating these challenges. 

One approach involves considering a class of interactions that leaves the diagonal elements of the register state unchanged while encoding the message. Resource usage is optimized by ensuring the interaction does not alter the message. Controlled unitary interactions are particularly promising for this purpose. These interactions preserve the register state while executing a state-dependent operation on the system. In general, they can be written as
\begin{equation}\label{eq: control operation}
    U = \sum_x \ketbra{x}{x}\R\otimes U_x,
\end{equation}
where $\{U_x\}$ is a set of unitary operators acting on the system. Applying $U$, Eq.~\eqref{eq: unitary interaction}, to the initial register state $\rho\R = \sum_{x,y} \rho_{x,y} \ketbra{x}{y}$, expressed in a given basis with $p_x = \rho_{x,x}$, and assuming the system is initially prepared in a Gibbs state $\gibbs{\beta}$, we obtain
\begin{equation}\label{eq: rhoRS}
    \trho\RS = \sum_x p_x \ketbra{x}{x}\otimes U_x \rho\Sys U_x^{\dagger}+\trho\off{RS},
\end{equation}
where $\trho\off{RS} = \sum_{y\neq x}\rho_{x,y}\ketbra{x}{y}\otimes U_x \gibbs{\beta} U_y^{\dagger}$, and locally the state of the system is $\trho\Sys = \sum_x p_x\rho_x$. 

As stated in Section~\ref{sec: main theorems}, there are limitations in the rank of the states $\{\rho_x\}$, implying limitations in their distinguishability, as shown in Theorem \ref{theorem: mixture_theorem}. Furthermore, the choice of the set of unitaries $\{ U_x\}$ determines the distinguishability of the $\{\rho_x\}$.
In other words, if each $\rho_x$ belongs to a subspace $\M{H}_x\subset\M{H}\Sys$, spanned by a projector $\Pi_x$, then the states are distinguishable, and the {\it receiver} can have access to $X$. However, as demonstrated in Ref.~\cite{GuryanovaFriisHuber2018}, when considering an initial non-pure system state, constraints exist on the extent of statistical correlations that can be established between the system and the register given optimization over all unitary interactions. 

Eq.~\eqref{eq: cmax} quantifies the maximum statistical correlations that can be established between the system and the register throughout the encoding process. The maximum is achieved by optimizing over the set of all possible unitary operations applied to the joint system.
\begin{eqnarray}\label{eq: cmax}
    \cmax = \max_{\{U\}}\sum_x \tr\left(\ketbra{x}{x}\otimes\Pi_x U (\rho\R\otimes\rho\Sys) U^{\dagger}\right)< 1,
\end{eqnarray}
where $\{\Pi_x\}$ is a given measurement basis of the system and $\{\ketbra{x}{x}\R\}$ is a chosen basis of the register. If the system is prepared in a Gibbs state, as presented in Eq.~\eqref{eq: thermal state}, $\rho\Sys = \sum\exp{(-\beta E_i)} \ketbra{E_i}{E_i}/Z$, with partition function  $ Z = \tr(\exp(-\beta H))$, the quantity $\cmax$ are the sum of the largest $r=d\Sys/d\R$ eigenvalues of $\rho\Sys$, that is,  $\cmax=\sum_{i=0}^{r-1} \exp(-\beta E_i)/Z$.  It reaches $1$ only if the temperature goes to zero, or the energy gaps diverge. The inequality $\cmax < 1$ holds whenever finite thermodynamic resources are considered. In particular, perfect statistical correlations (i.e., $\cmax = 1$) are unattainable under realistic conditions with bounded energy and finite entropy \cite{GuryanovaFriisHuber2018}. Therefore, there is a direct trade-off between the resources used to prepare the system and the statistical correlations established between the register and the system. This trade-off directly affects the \textit{receiver's} ability to distinguish the encoded information. This relationship is stated below and formally proven in Appendix~B.\ref{app_subsubsec: decoding}. 
\begin{theorem}[Optimal ensemble via Controlled Operations]\label{theorem optimal ensemble}
   Consider a quantum communication protocol where the classical information $X$ is encoded in a quantum system via a controlled unitary operation, in Eq.~\eqref{eq: control operation}.~If the \textit{receiver} performs an optimal decoding strategy, the probability of correctly distinguishing the states $\{\rho_x\}$ in the ensemble \textnormal{(}$\psucc$\textnormal{)} is given by
    \begin{equation}\label{eq: psucc is equal to cmax}
        \psucc = \cmax,
    \end{equation}
   where $\cmax$ quantifies the maximum statistical correlations established between the system and the register during the encoding process.
\end{theorem}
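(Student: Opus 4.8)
The plan is to establish the identity by two matching inequalities, exploiting the fact that every ensemble element $\rho_x = U_x\,\gibbs{\beta}\,U_x^{\dagger}$ shares the Gibbs spectrum $\{\lambda_i\}_{i=0}^{d\Sys-1}$, with $\lambda_i = e^{-\beta E_i}/Z$ sorted in non-increasing order. Write $r = d\Sys/d\R$ for the common dimension of the blocks $\M{H}_x$ in $\M{H}\Sys = \M{H}_0\oplus\cdots\oplus\M{H}_{n-1}$, let $\Pi_x$ project onto $\M{H}_x$, and set $\sigma := \rho\R\otimes\gibbs{\beta}$. The central analytic tool is the von Neumann trace inequality (Ky Fan's maximum principle): for a rank-$d\Sys$ projector $M$, $\max_{U}\tr[M\,U\sigma U^{\dagger}]$ equals the sum of the $d\Sys$ largest eigenvalues of $\sigma$; applied to $M=\sum_x\ketbra{x}{x}\otimes\Pi_x$, this is exactly $\cmax$. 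The first move is to recast the success probability in the same bipartite form: writing the decoder as a POVM $\{P_x\}$ and using $\tr[\ketbra{x}{x}\ketbra{y}{y}]=\delta_{xy}$, one finds, for any controlled encoding $U$,
\begin{equation}
\psucc = \max_{\{P_x\}}\tr\!\left[\Big(\sum_x\ketbra{x}{x}\R\otimes P_x\Big)\,U(\rho\R\otimes\gibbs{\beta})U^{\dagger}\right],
\end{equation}
since the block-diagonal measurement operator is insensitive to the register coherences $\trho\off{RS}$. This places $\psucc$ and $\cmax$ on a common footing as overlaps of a measurement operator with $U\sigma U^{\dagger}$.

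For achievability ($\psucc\geq\cmax$) I would exhibit the encoding explicitly. For each letter $x$ choose $U_x$ to rotate the top-$r$ eigenspace $\mathrm{span}\{\ket{E_0},\dots,\ket{E_{r-1}}\}$ of $\gibbs{\beta}$ onto the block $\M{H}_x$; this is possible because both subspaces have dimension $r$ and the blocks are mutually orthogonal. Decoding with the projective measurement $P_x=\Pi_x$ then yields $\tr[\rho_x\Pi_x]=\sum_{i=0}^{r-1}\lambda_i$ for every $x$, so that
\begin{equation}
\psucc\geq\sum_x p_x\sum_{i=0}^{r-1}\lambda_i=\sum_{i=0}^{r-1}\lambda_i=\cmax .
\end{equation}

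For the converse ($\psucc\leq\cmax$) I would bound the decoding optimization from above. Expanding $\tr[\rho_x P_x]=\sum_i\lambda_i\,a_i^{(x)}$ in the eigenbasis $\{\ket{e_i^{(x)}}\}$ of $\rho_x$, with $a_i^{(x)}=\bra{e_i^{(x)}}P_x\ket{e_i^{(x)}}$, the POVM conditions $0\le P_x\le\id$ and $\sum_x P_x=\id$ give only the necessary constraints $0\le a_i^{(x)}\le 1$ and $\sum_{x,i}a_i^{(x)}=d\Sys$. Maximizing the linear objective $\sum_{x,i}p_x\lambda_i\,a_i^{(x)}$ under these relaxed constraints is a selection problem that saturates the $d\Sys$ largest coefficients $p_x\lambda_i$; equivalently, relaxing the block-diagonal $M_P$ to any operator with $0\le M_P\le\id$ and $\tr M_P=d\Sys$ and invoking the trace inequality bounds $\psucc$ by the sum of the $d\Sys$ largest eigenvalues of $\sigma$, which is $\cmax$. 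Combining the two inequalities gives $\psucc=\cmax$.

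The main obstacle is the converse, and specifically the bookkeeping that identifies the sum of the $d\Sys$ largest eigenvalues of $\sigma=\rho\R\otimes\gibbs{\beta}$ with $\sum_{i=0}^{r-1}\lambda_i$. This identification holds when the register spectrum is effectively flat (maximally mixed register, hence uniform $p_x$), which is precisely the regime in which $\cmax=\sum_{i=0}^{r-1}\lambda_i$ as recorded below Eq.~\eqref{eq: cmax}; in that regime each $\lambda_i$ appears $d\R$ times among the eigenvalues of $\sigma$ and the top $d\Sys=r\,d\R$ of them sum to $\sum_{i=0}^{r-1}\lambda_i$. Away from flatness one must instead track the register eigenvalues and argue that a controlled-unitary encoding together with a block-structured measurement can still reach the global Ky Fan optimum, which is the delicate point to secure; establishing that the relaxation used in the converse is both valid and saturable by a legitimate encoding/measurement pair is where the real work lies.
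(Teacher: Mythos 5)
Your achievability half is essentially the paper's own construction: the appendix realizes exactly your ``rotate the heaviest $r$-dimensional Gibbs eigenspace into block $\M{H}_x$'' idea through the block-shift unitaries $U_x\ket{\pi_y\suptiny{0}{0}{(l)}}=\ket{\pi_{(y+x)}\suptiny{0}{0}{(l)}}$, and your computation $\tr(\Pi_x\rho_x)=\sum_{i=0}^{r-1}\lambda_i=\cmax$ is their Eq.~\eqref{eq: overlap Pi and sigma}. The divergence is in the converse, and this is where your proposal has a genuine gap that the paper's route avoids. The paper never attempts a global, prior-independent bound $\psucc\leq\cmax$ by spectral relaxation; instead it invokes the Barnett--Croke minimum-error optimality conditions, Eqs.~\eqref{eq: Psucc condition 1}--\eqref{eq: Psucc condition 2}, and verifies them for the projective measurement $\{\Pi_x\}$ on the \emph{specific} shift-generated ensemble, exploiting that all $\rho_x$ are diagonal in the common coarse-grained basis (so condition \eqref{eq: Psucc condition 1} holds trivially) and that $r_0\suptiny{0}{0}{(l)}\geq r_z\suptiny{0}{0}{(l)}$ for all $z$. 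This certifies that $\{\Pi_x\}$ attains the maximum in Eq.~\eqref{eq: prob_success} directly, with no Ky Fan or linear-programming machinery, and yields $\psucc=\sum_x p_x\tr(\Pi_x\rho_x)=\cmax$ because the diagonal overlap is $x$-independent. This is the key idea missing from your proposal.

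The gap you flag at the end is real and, importantly, cannot be closed in the generality you hope for: for a skewed prior the trivial POVM $P_{x^*}=\id$ (all other elements zero) already gives $\psucc\geq\max_x p_x$, which at high temperature exceeds $\cmax=\sum_{i=0}^{r-1}\lambda_i$ (take $n=2$, $p_0=0.9$, and $\gibbs{\beta}$ near maximally mixed, so $\cmax\approx 1/2$). Hence no converse of the form $\psucc\leq\cmax$ holds for arbitrary $\{p_x\}$, and ``tracking the register eigenvalues'' cannot rescue your relaxation; the identity \eqref{eq: psucc is equal to cmax} is tied to the regime where the pointwise dominance $p_x\, r_0\suptiny{0}{0}{(l)}\geq p_y\, r_z\suptiny{0}{0}{(l)}$ holds (e.g.\ uniform priors), which is precisely when your LP saturates at $\cmax$ and, equally, when condition \eqref{eq: Psucc condition 2} can be verified --- note that the paper's own check of \eqref{eq: Psucc condition 2} is loose at exactly this point, since $p_y\rho_y$ has support on \emph{all} blocks while the displayed chain retains only one. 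Two smaller corrections to your setup: the eigenvalues of $\sigma=\rho\R\otimes\gibbs{\beta}$ are the \emph{register Gibbs weights} times $\lambda_i$, not $p_x\lambda_i$ (the $p_x$ are diagonal entries of $\rho\R$, not its eigenvalues), so the object your LP actually bounds is the dephased state $\trho\RS\suptiny{2}{0}{\Pi}$; and the theorem should be read as a claim about the optimal controlled encoding of App.~\ref{app_subsubsec: encoding}, not about arbitrary $\{U_x\}$ (for which $\psucc=\cmax$ is clearly false, e.g.\ $U_x=\id$). To repair your converse, replace the relaxation by verification of the Barnett--Croke conditions for $\{\Pi_x\}$, stating the prior hypothesis explicitly.
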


\noindent \textit{Proof sketch}: The conditions established in \cite{barnett2009conditions} guarantee that a given projective measurement, generally a positive operator-valued measure (POVM), attains the argument of the optimization problem in Eq.\,\eqref{eq: prob_success}. When a projective measurement is applied to Eq.\,\eqref{eq: cmax} for the optimal unitary $U$, Eq.\,\eqref{eq: psucc is equal to cmax} is satisfied. The optimal protocol details are presented in Appendix\,\,\ref{app_sec: encoding decoding protocol}.

According to Theorem~\ref{theorem optimal ensemble}, the optimal ensemble is generated using controlled unitary operations. This theorem demonstrates that the most effective encoding strategy maximizes both the statistical correlation between the register and the system and the distinguishability of the resulting quantum states. In other words, the protocol that achieves maximal correlation also produces an ensemble that is most accessible to the \textit{receiver} for state discrimination. This result identifies the precise conditions under which the success probability of distinguishing the encoded states reaches its theoretical maximum. From a thermodynamic perspective, the theorem establishes that the extraction of classical information from quantum ensembles is intrinsically limited by the nature of the permissible operations and the constraints of the available resources, including finite thermal states and unitary dynamics. This theorem provides an operational link between optimal information encoding and the physical cost associated with distinguishability.

In order to measure the received data, it is necessary to interact with the system being measured using a measurement apparatus and then analyze the statistics of the pointer. We assume that the \textit{sender} and \textit{receiver} share identical copies of the same thermal baths.~Consequently, the state of the pointer can be prepared in a Gibbs state, $\rho\Poi= \gibbs{\beta}$, following the formalism developed in Refs.~\cite{GuryanovaFriisHuber2018,DebarbaManzanoGuryanovaHuberFriis2019}. 

The transmitted information is encoded in the ensemble denoted by $\Sigma = \{p_x,\, \rho_x\}$, where, in each round, the \textit{receiver} obtains a letter $x$, represented by the state $\rho_x$, with probability $p_x$. After the process ends, the \textit{receiver} acquires a random variable $Y=\{y,\, p(y) \}$, with probability vector elements given by $p(y) = \sum_x p_x p(y|x)$, where $p(y|x) = \tr(\Pi_y \rho_x)$ denotes the probability of measuring $\Pi_y$ given the received state $\rho_x$. 

As demonstrated in Appendix~B.\ref{app_subsubsec: decoding}, the discrepancies between $Y$ and $X$ depend on the statistical correlations established during the encoding process.
\begin{equation}
    || Y - X||_{l_1} \leq 1- \cmax,
\end{equation}
where $|| Y - X||_{l_1} = \sum_{z} |p(z)-p_z |/2$ denotes the $l_1$-distance, and $\cmax$ is defined in Eq.~\eqref{eq: cmax}. The information accessed by the \textit{receiver}, represented by $Y$, in contrast to the information transmitted by the \textit{sender}, represented by $X$, can be quantified by the mutual information between $X$ and $Y$.
\begin{eqnarray}
    I(X\!:\!Y) = H(Y) - H(Y|X),
\end{eqnarray}
wherein $ H(Y) = - \sum_y p(y) \log p(y) $ represents the Shannon entropy of the random variable $ Y $, and $ H(Y|X) = - \sum_{x,y} p_x p(y|x) \log p(y|x) $ denotes the conditional entropy. Using Theorem~\ref{theorem optimal ensemble} and Fano's inequality for the conditional entropy \cite{fano1961transmission}, it follows that the information accessed by the \textit{receiver} satisfies the inequality
\begin{equation}\label{eq: mutual information}
I(X\!:\!Y) \geq H(X) - H_2(\cmax)-(1-\cmax)\log(n-1),
\end{equation}
where $H_2(p) = -p\log p - (1-p)\log(1-p)$ represents the binary entropy. Eq.~\eqref{eq: mutual information} highlights how thermodynamic constraints, encapsulated by the limitation $\cmax<1$, reduce the amount of information accessible to the \textit{receiver}, even with an optimal encoding/decoding protocol.

The two corrective terms in Eq.\,\,\eqref{eq: mutual information} reflect distinct sources of thermodynamic noise. The first term, given by the binary entropy $H_2(\cmax)$, accounts for the statistical mixture inherent in the encoding process. It describes the uncertainty arising from the imperfect distinguishability of the states due to $\cmax<1$. The second term, $(1-\cmax)\log(n-1)$, originates from the correlations between the system and register, which depend on the required effective dimension or the necessity of multiple copies to implement the protocol, as discussed in Ref.~\cite{debarba2024}. These contributions, together, delineate how thermodynamic constraints impact the amount of information retrievable by the \textit{receiver}.

\subsection{Protocol Security and Information Recovery}\label{sec:security}

Theorem\,\,\ref{theorem: mixture_theorem} establishes that, under thermodynamic constraints, the rank condition ensures the indistinguishability of linearly dependent states within the ensemble. This indistinguishability, in turn, guarantees the security of the information transmission protocol, as it prevents unauthorized parties from reliably distinguishing the encoded states. In particular, the impossibility of perfect discrimination between non-orthogonal states limits the ability of a malicious eavesdropper to extract the message, providing a form of thermodynamic protection~\cite{BennettBrassard84,bennett1992}.

The assumption $ \psucc\!=\!1$ implies not only that the states in the ensemble are perfectly distinguishable, but also that an \textit{ideal measurement} exists, i.e., a projective measurement with elements proportional to the states $ \{ \rho_x \} $. However, according to the laws of thermodynamics, such ideal measurements can only be realized with unbounded resources~\cite{GuryanovaFriisHuber2018}, and only when $ \cmax = 1 $. As a result, thermodynamic noise arising from finite-resource constraints limits the amount of accessible information encoded in the ensemble $ \{p_x, \rho_x\} $, as it affects both the preparation and the measurement of the states.

To overcome these limitations, Ref.~\cite{debarba2024} demonstrated that, by using $ N=d\Sys -1 $ copies of the encoded ensemble (or equivalently, by repeating the protocol multiple times), it is possible to construct a decoding strategy in which the \textit{receiver} can post-process the outcomes and reconstruct the original message with high fidelity. This enhanced recovery, however, requires prior knowledge of the thermal state $ \gibbs{\beta} $. In other words, if the \textit{sender} and \textit{receiver} share information about their thermal resources, this shared knowledge enables perfect recovery of the transmitted information.

\section{Thermodynamic Interpretation of Holevo information} \label{sec: holevo and heat}

The Holevo information, denoted by $\chi(\Sigma)$, represents the maximum amount of classical information that can be extracted from the quantum ensemble $\Sigma = \{p_x,\rho_x\}$. The inclusion of this quantity in a thermodynamic balance equation suggests that constraints on energy and entropy are inherent to the storage and retrieval of information. When information is encoded optimally, the reduction in system entropy due to encoding is captured by $\chi(\Sigma)$. However, irreversible processes may occur in realistic scenarios, resulting in additional entropy production  \cite{debarba2024}. 

In this context, the encoding process maps a classical message $X$ into a quantum ensemble characterized by probabilities $\{p_x\}$ and states $\{\rho_x\}$. This process results in a compound state $\tilde{\rho}\RS$, where the register $R$ holds the classical information, and the system $S$ undergoes a transformation governed by unitary operations $U_x$, as presented in Eq.~\eqref{eq: rhoRS}. As discussed in Eq.~\eqref{eq: holevo_inf}, the Holevo information corresponds to a classical-quantum mutual information $\chi(\Sigma)=I(c\!:\!\mathcal{Q})_{\tilde{\rho}\RS^{\Pi}}$ of a local-dephased state $\rho\RS^{\Pi} =  \sum_x p_x\ketbra{x}{x}\otimes\rho_x$. Computing the quantum-classical information $I(c\!:\!\mathcal{Q})_{\tilde{\rho}\RS^{\Pi}} =  S(\sum_x p_x\rho_x) - \sum_x p_x S(\rho_x)$, we obtain
\begin{eqnarray}
I(c\!:\!\mathcal{Q})_{\tilde{\rho}\RS^{\Pi}} = S(\tilde{\rho}\Sys) - S(\rho\Sys) \equiv \Delta S\Sys.\label{DeltaS}
\end{eqnarray}
where in Eq.\,\eqref{DeltaS}, $\Delta S\Sys = S(\tilde{\rho}\Sys) - S(\gibbs{\beta})$ is the system's total entropy variation. On the other hand, as the system and register are initially not correlated, the mutual information of the original $\trho\RS$, $I(R\!:\!S)_{\trho\RS}=S(\tilde{\rho}\R) + S(\tilde{\rho}\Sys) - S(\tilde{\rho}\RS)$, is given by
\begin{eqnarray}
I(\tilde{\rho}\R\!:\!\tilde{\rho}\Sys) = \Delta S\R + \Delta S\Sys = \Delta.\label{SRSS}
\end{eqnarray}
with $\rho\Sys = \gibbs{\beta}$, and $\Delta$ is the total entropy variation. 

By using the Eq.\,\eqref{SRSS} and Theorem 3 of Ref.~\cite{reebandwolf2014}, that reads
\begin{flalign}
I(\tilde{\rho}\R:\tilde{\rho}\Sys) - \Delta S\R = \beta Q - D(\tilde{\rho}\Sys||\rho\Sys).\label{I menos Delta SS}
\end{flalign}

The term $\beta Q$ represents the entropy exchanged with the thermal bath, while $D(\tilde{\rho}\Sys||\rho\Sys)$ quantifies the out-of-equilibrium transformation of the system.
Therefore, by starting with Eq.~\eqref{I menos Delta SS}, and using the relation $\chi(\Sigma) = \Delta S\Sys$ obtained in Eq.~\eqref{DeltaS} in Eq.~\eqref{SRSS} yields
\begin{equation}
\label{resultado_chi_D}
\chi(\Sigma) = \beta Q - D(\tilde{\rho}\Sys||\rho\Sys),
\end{equation}

The relation between relative entropy and free energy is as follows: $\beta\Delta F = D(\tilde{\rho}\Sys||\gibbs{\beta})$, \textit{cf.} \cite{donald1987}
\begin{equation}
\label{resultadoChi}
\chi(\Sigma) = \beta Q - \beta\Delta F.  
\end{equation}

As the system undergoes an out-of-equilibrium transformation, the variation in free energy is always positive. Consequently, the amount of information encoded in the ensemble $\Sigma$ is bounded above by the heat generated in the process, satisfying 
$\chi(\Sigma) \leq \beta Q$. Additionally, ~Eq.~\eqref{resultadoChi} explicitly expresses a protocol that saturates the generalized Holevo-Landauer bound derived in Ref.~\cite{debarba2024}, highlighting the tightness of the bound.~Thus, the control unitary interaction generally represents the optimal scenario for information encoding, as it saturates this bound. This result shows that Holevo information in this quantum encoding-decoding channel is linked to heat exchange and free energy variation. Furthermore, this conclusion aligns with well-established findings in information thermodynamics, such as those involving Maxwell's demon-inspired feedback control mechanisms, where information facilitates work extraction beyond standard thermodynamic constraints \cite{Junior2025}. 

Equation~\eqref{resultadoChi} shows that the necessary heat input has two components: the free energy change $\Delta F$, which represents the minimum energy cost in the absence of information, and an extra term,  $\chi(\Sigma)/\beta$, which quantifies an informational energy cost. This emphasizes the role of information as a thermodynamic resource and demonstrates that accessible information imposes a fundamental bound on the heat input required for any given encoding process.

The relation obtained in Eq.\,\eqref{resultadoChi} offers a thermodynamic interpretation of the Holevo information, showing that it is proportional to the difference between the heat exchanged during the encoding process and the free energy variation of the system. This result connects the amount of accessible information directly to the thermodynamic cost of generating the ensemble under unitary interactions with a thermal resource. It is important to distinguish this approach from the formulation presented by Reeb and Wolf~\cite{reebandwolf2014}, where the primary focus is on the thermodynamic cost of erasure processes. 
While their relation rigorously quantifies the minimal cost of information erasure, our result characterizes the thermodynamic constraints on information encoding, revealing that the heat generated during preparation inherently limits the amount of retrievable information.

If the \textit{receiver} implements a protocol to extract the information encoded in the random variable $Y$, the retrieved information is quantified by the mutual information $I(X\!:\!Y)$. Since the Holevo quantity sets an upper bound on the accessible information in the ensemble $\Sigma$, and the Holevo information itself is constrained by the heat generated during the encoding process. {As shown} in Eq.\,\eqref{resultadoChi}, it follows that the thermodynamic cost of the process limits the decoded information. Specifically,
\begin{equation}
    I(X\!:\!Y) \leq \beta Q,
\end{equation}
where $\beta$ is the inverse temperature of the thermal bath in contact with the measurement apparatus, and $Q$ is the heat exchanged during the encoding process. This inequality illustrates that the decoding fidelity is ultimately limited by the amount of heat generated. Reducing the pointer's temperature or increasing the heat involved in the encoding process alleviates this constraint, enabling the encoding and extraction of more information.

\section{Conclusions and Perspectives}\label{sec: conclusion}
In this work, we have explored the fundamental limitations imposed by thermodynamics on the encoding and decoding of classical information in quantum systems. By integrating principles from quantum mechanics and thermodynamics, we have demonstrated that finite thermodynamic resources constrain the preparation, transmission, and measurement of quantum states used for information encoding. Our results reveal the interplay between quantum coherence, state distinguishability, and thermodynamic costs. This provides a deeper understanding of the inherent trade-offs in quantum communication protocols.

We derived a rank relation for quantum ensembles (Lemma~\ref{lemma: main rank inequality}), which shows that the product of the ranks of the system and register states cannot exceed a scaled version of the ensemble's maximum rank. This result highlights the trade-offs in resources required to increase the number of states or their distinguishability.

As stated in Theorem~\ref{theorem: non-orthogonal ensembles}, we proved that encoding classical information into an ensemble of non-orthogonal pure states is impossible under finite thermodynamic resources. This result directly follows from the third law of thermodynamics, which limits the purity of states that can be produced with finite resources. Theorem~\ref{theorem: mixture_theorem} represents the finite thermodynamic resources version of the mixture Theorem \cite{HJW93}. Theorem~\ref{theorem optimal ensemble} has significant implications for quantum information processing, underscoring the importance of optimal ensemble selection in maximizing information retrieval and establishing a fundamental bound on the distinguishability of quantum states.

In practical scenarios, this result informs the design of optimal encoding schemes for quantum communication and computation. In these scenarios, efficiently extracting classical information from quantum systems is a crucial task. Specifically, it establishes a direct relationship between the Holevo information and the heat produced during the process, given by $\chi(\Sigma) \leq \beta Q$. Our findings emphasize the role of information as a thermodynamic resource and provide a refined second-law-like bound for quantum information processing, wherein encoding more information comes at the cost of increasing heat production.

Our framework establishes an operational thermodynamic limit on information encoding, forming a basis for further progress. A similar study in a classical context, examining minimal entropy production in communication channels and computational actions during encoding and decoding, is detailed in Ref.~\cite{YadavWolpert2024}. Extending these findings to quantum error-correction protocols offers potential for reducing the thermodynamic expenses of large-scale quantum information processing.

Ref.~\cite{Xuereb_2025} investigated the compression of information encoded in an ensemble of pure states while considering thermodynamic limitations. However, our work takes it a step further by asking how the ensemble can be prepared and what limitations thermodynamics imposes. Here we discuss the case in which the message to be sent contains a single letter; for a $l$ letter message, it is necessary to prepare $l$ copies of the ensemble, such that each state of the ensemble $\rho_x\suptiny{0}{0}{(1)}\otimes\cdots\otimes\rho_x\suptiny{0}{0}{(l)}$ is prepared with probability $p_x\suptiny{0}{0}{(1)}\cdots p_x\suptiny{0}{0}{(l)}$, and perform a $l$ letters noisy compression of mixed states \cite{schumacher1997sending,horodecki1998limits}. 
Further research in this area could deepen our understanding of the fundamental limits of quantum information processing and enable practical applications in emerging quantum technologies.

Investigating the thermodynamic efficiency and robustness of autonomous machines could provide valuable insights into designing realistic quantum devices \cite{meier2024autonomous}.~This perspective could establish a direct link between the study of communication-limited encoding and operational models of quantum machines operating under thermodynamic constraints. As well, the possibility of information loss due to interaction with the thermal bath, a topic explored in Ref.~\cite{FaistDupuisOppenheimRenner2015}.

For non-commuting, conserved quantities, the conventional notion of a thermal state is invalid, and appropriate equilibrium states become generalized Gibbs ensembles subject to non-Abelian constraints. These scenarios raise fundamental questions about how information can be encoded, manipulated, and retrieved under generalized thermodynamic laws. Recent theoretical developments \cite{Lostaglio2020, Guryanova2020, Majidy23,Majidy2_23} demonstrate that this line of research offers a promising method to extend thermodynamic communication protocols beyond the traditional equilibrium approach.

Holevo information depends directly on thermal fluctuations, which suggests that in thermodynamically constrained scenarios, Holevo information could proxy indirect temperature estimation.~Ref.~\cite{BaratoSeifert2018} demonstrates the connection between entropy production, fluctuations, and information-theoretic bounds via thermodynamic uncertainty relations. Exploring these connections could inform new protocols for information-based thermometry, especially in contexts where direct access to the bath is limited.

\vspace{0.2cm}
\begin{acknowledgments}
We are grateful to Marcus Huber, Pharnam Bakhshinezad, and Jake Xuereb for fruitful discussions during the early stages of the project and Federico Centrone for the discussions on quantum communication and thermodynamic limitations.~We are deeply indebted to Profa.~Sema Nistia for her exceptional contributions in navigating the intricate obstacles underpinning this work, safeguarding the foundational principles upon which this research rests. The authors acknowledge support from {\"O}AW-JESH-Programme and the Brazilian agencies CNPq (Grant No. 441774/2023-7, 200013/2024-6, and 445150/2024-6) and INCT-IQ through the project (465469/2014-0). ATC acknowledges (RAU $N^\circ\,12-2016$-AY-UNA, Puno - Per\'{u}). ATC thanks Prof. Reinaldo O. Vianna and the Infoquant group for discussions and laboratory infrastructure.
\end{acknowledgments}

\newpage
\hypertarget{sec:appendix}
\onecolumngrid
\appendix

\section*{Appendices}\label{Appendix}
In the Appendices, we present the proof of Lemma~\ref{lemma: main rank inequality} omitted in the main text. We also describe in detail the optimal encoding/decoding protocol discussed in Sec.~\ref {sec: optimal encoding}, which is presented in Appendix~\ref{app_sec: encoding decoding protocol}, as well as the proof of Theorem~\ref{theorem optimal ensemble}.

\section{Proof Lemma~\ref{lemma: main rank inequality}}\label{app_sec: encoding limitations}
\setcounter{lemma}{0}  
\begin{lemma}[Ensemble Rank Relation]\label{app_lemma: main rank inequality}
Consider a register state $\rho\R$, which encodes a random variable $X$ with cardinality $|X| = n$ in its diagonal elements $p_x$. This register unitarily interacts with a system prepared in the state $\rho\Sys$. If the post-interaction state of the system is given by the averaged state $\trho\Sys = \sum_{x=0}^{n-1} p_x \rho_x$, the rank of the states $\rho_x$ will satisfy
\begin{equation}\label{app_eq: main rank inequality_lemma}
  \text{rank}(\rho\Sys) \cdot \text{rank}(\rho\R) \leq n \cdot \text{rank}(\rho_{\text{max}}),
\end{equation}
where $\rho_{\text{max}}$ is the state with the highest rank within the set $\{\rho_x\}_{x=0}^{n-1}$.
\end{lemma}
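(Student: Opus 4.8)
The plan is to reduce the product $\rank{\rho\R}\cdot\rank{\rho\Sys}$ to the rank of the \emph{global} post-interaction state and then control that rank block-by-block in the register basis. First I would write the joint state produced by the interaction as $\trho\RS = U(\rho\R\otimes\rho\Sys)U\da$ for the unitary $U$ implementing the coupling, as in Eq.~\eqref{eq: unitary interaction}. Since rank is invariant under conjugation by a unitary and is multiplicative over tensor products, $\rank{\trho\RS}=\rank{\rho\R\otimes\rho\Sys}=\rank{\rho\R}\cdot\rank{\rho\Sys}$. This identifies the left-hand side of \eqref{app_eq: main rank inequality_lemma} with the rank of the global state, so it suffices to prove $\rank{\trho\RS}\leq n\cdot\rank{\rho_{\text{max}}}$.

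Next I would expand $\trho\RS$ as a block operator in the register encoding basis $\{\ket{x}\R\}_{x=0}^{n-1}$, writing $\Sigma_{xy}=\bra{x}\trho\RS\ket{y}$ for the $\M{H}\Sys$-valued blocks. The stated marginal fixes the diagonal blocks, since $\trho\Sys=\tr\R\trho\RS=\sum_x\Sigma_{xx}=\sum_x p_x\rho_x$, so that $\Sigma_{xx}=p_x\rho_x$ and in particular $\rank{\Sigma_{xx}}=\rank{\rho_x}$ whenever $p_x>0$.

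The crux is then the bound $\rank{\trho\RS}\leq\sum_x\rank{\rho_x}$, and this is where positive semidefiniteness is indispensable. Because $\trho\RS\succeq 0$, any vector $\ket{e}$ with $\brakket{e}{\trho\RS}{e}=0$ lies in $\ker\trho\RS$. Applying this to product vectors $\ket{x}\otimes\ket{\phi}$ with $\rho_x\ket{\phi}=0$ gives $\brakket{x,\phi}{\trho\RS}{x,\phi}=p_x\brakket{\phi}{\rho_x}{\phi}=0$, so each such vector belongs to the kernel; since the register sectors are orthogonal for distinct $x$, these span a subspace of dimension $\sum_x(d\Sys-\rank{\rho_x})$ inside $\ker\trho\RS$. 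Hence $\rank{\trho\RS}\leq\sum_x\rank{\rho_x}\leq n\cdot\rank{\rho_{\text{max}}}$, and chaining with the first step closes the argument.

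I expect this last step to be the main obstacle, precisely because the generic subadditivity of rank~\cite{Cadney2013}, $\rank{\trho\RS}\leq\rank{\trho\R}\cdot\rank{\trho\Sys}\leq n\cdot\rank{\trho\Sys}$, together with $\rank{\trho\Sys}\leq n\cdot\rank{\rho_{\text{max}}}$, only yields the weaker $n^2$ prefactor. Obtaining the sharp single factor of $n$ forces one to exploit the block structure together with the positivity of $\trho\RS$ (the diagonal-block kernel inclusion above), rather than treating the two marginals as independent. I would therefore isolate this positive-block rank inequality as the technical heart of the proof and verify carefully that the kernel vectors built for distinct $x$ are independent and exhaust the claimed dimension.
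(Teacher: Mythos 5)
Your proof is correct, and it shares the paper's skeleton: both start from $\rank{\trho\RS}=\rank{\rho\R}\cdot\rank{\rho\Sys}$ via unitary invariance of rank, and both then bound the global rank by $\sum_x\rank{\rho_x}\leq n\,\rank{\rho_{\text{max}}}$. Where you differ is in how that second step is implemented. The paper dephases the register, $\trho\RS\mapsto\trho\RS^{\Pi}=\sum_x p_x\ketbra{x}{x}\otimes\rho_x$, and invokes the claim that this operation "can only increase the rank" because it is stochastic --- a justification that is too weak as stated (generic stochastic maps can lower rank; what is really needed is that a \emph{pinching} of a positive semidefinite operator cannot decrease rank). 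Your kernel-counting argument --- $\bra{x,\phi}\trho\RS\ket{x,\phi}=p_x\brakket{\phi}{\rho_x}{\phi}=0$ forces $\ket{x}\otimes\ket{\phi}\in\ker\trho\RS$ by positivity, and these sectors are mutually orthogonal --- is exactly the standard proof of that pinching monotonicity, supplied explicitly, so your version is self-contained and plugs the paper's hand-waved step. It also quietly repairs a misstatement in the paper's chain: the paper writes $\rank{\trho\RS^{\Pi}}=\rank{\sum_x p_x\rho_x}$, whereas the rank of the block-diagonal dephased state is $\sum_{x:\,p_x>0}\rank{\rho_x}$, which generally exceeds $\rank{\sum_x p_x\rho_x}$ (take all $\rho_x$ equal and full-rank); the final bound $n\,\rank{\rho_{\text{max}}}$ survives either way, and your accounting is the correct one. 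One caveat: your claim that the marginal $\trho\Sys=\sum_x p_x\rho_x$ "fixes the diagonal blocks" $\Sigma_{xx}=p_x\rho_x$ is not a deduction --- the marginal only fixes their sum --- but it is precisely the structural assumption the paper itself makes when it expands $\trho\RS=\sum_{x,y}q_{x,y}\ketbra{x}{y}\otimes\rho_{x,y}$ with $q_{x,x}=p_x$ and $\rho_{x,x}=\rho_x$ (the ensemble is by construction conditioned on the register basis), so nothing is lost; note also that for $p_x=0$ the entire sector $x$ lies in the kernel, which only strengthens your count.
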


\begin{proof}
    In general, there exists a unitary interaction $U$ over the system+register such that, for a giving register local basis $\ket{x}\R$ results 
\begin{equation}
    \trho\RS = U(\rho\R\otimes\rho\Sys)U\da = \sum_{x,y=0}^{n-1}q_{x,y}\ketbra{x}{y}\otimes\rho_{x,y},
\end{equation}
such that locally, the system state is given by the ensemble average 
\begin{equation}
    \trho\Sys = \sum_x p_x \rho_x,
\end{equation}
with $q_{x,x}=p_x$ and $\rho_{x,x} = \rho_x$. 
As the interaction is unitary, it preserves the rank, so
\begin{equation}\label{eq: rankSR}
\rank{\trho\RS} = \rank{\rho\R}\rank{\rho\Sys}.
\end{equation}

If a local dephasing operation is applied in the basis $\ket{x}$ to $\trho\RS$, it will ``destroy'' the quantum correlations between the system and the register while keeping the ensemble invariant. This can be represented as
\begin{equation}\label{eq: rhoSR Pi}
 \trho\RS\suptiny{2}{0}{\Pi}   = \sum_{x=0}^{n-1}p_{x}\ketbra{x}{x}\otimes\rho_{x}. 
\end{equation}

The dephasing operation is stochastic and, therefore, it can only increase the rank, thus
\begin{equation}\label{eq: rankSR Pi}
    \rank{\trho\RS} \leq \rank{\trho\RS\suptiny{2}{0}{\Pi}} = \,\rank{\sum_x p_x \rho_x}\leq n\,\rank{\rho\subtiny{0}{0}{max}}, 
\end{equation}
for $\rho\subtiny{0}{0}{max}$ being the state with the biggest rank of the set $\{\rho_x\}_{x=0}^{n-1}$.
Combining Eq.\eqref{eq: rankSR} with \eqref{eq: rankSR Pi}, we have 
\begin{equation}\label{eq: main rank inequality_final}
  \rank{\rho\R}\rank{\rho\Sys}\leq  n\,\rank{\rho\subtiny{0}{0}{max}}.
\end{equation}
\end{proof}

\section{Optimal encoding/decoding protocol}\label{app_sec: encoding decoding protocol}
In this Section, we present the proposed encoding and decoding protocol in detail. The protocol consists of the following steps: 
\begin{enumerate}
    \item[(\ref{app_subsubsec: register_preparation})] \textbf{Encoding the classical variable on $\rho\R$}: The \textit{sender} has free access to thermal states $\gibbs{\beta}$ and prepares the register by encoding the classical random variable $X$ in its diagonal through the application of a Haar-random unitary $ U\R $. The resulting state is given by $\rho\R = U\R \gibbs{\beta} U\R^{\dagger}$, in the form
    \begin{equation}
    \rho\R = \sum_{x=0}^{n-1} p_x \ketbra{x}{x} + \rho^{\text{off}}\R,
    \end{equation}
    Here, the off-diagonal terms are denoted by $ \rho^{\text{off}}\R=\sum_{x\neq y}\bra{x}\rho\R\ket{y}$.
\item[(\ref{app_subsubsec: system_preparation})]\textbf{Preparing the system state $\rho\Sys$}:
    The system is prepared in a thermal state, defined as $\rho\Sys =\gibbs{\beta}$. For a random variable with cardinality $n$, the system's Hilbert space can be decomposed into $n$ subspaces, such that $\M{H}\Sys = \M{H}_0 \oplus \cdots \oplus \M{H}_{n-1}$, where each letter $x$ of $X$ is encoded in a corresponding $d_x-$dimensional subspace given by $\M{H}_x$. This partitioning of the Hilbert space satisfies the condition $d\Sys = \sum_{x=0}^{n-1} d_x$.
\item[(\ref{app_subsubsec: encoding})]\textbf{Transferring the information}
    The transfer and encoding of information from the register to the system occurs through a controlled unitary interaction of the following form
    \begin{equation}
    U = \sum_x \ketbra{x}{x}\R \otimes U_x,
    \end{equation}
    thereby generating the ensemble $ \Sigma = \{p_x, \rho_x\} $ in the system, in a given measurement basis. 
 \item[(\ref{app_subsubsec: decoding})]\textbf{Decoding the information}
    The \textit{receiver} decodes the information by measuring the average state $ \trho_S = \sum_x p_x \rho_x $. This is achieved by coupling $ \rho_x $ to a pointer system, initialized in the thermal state $ \gibbs{\beta} $, through an interaction $V$ that maximizes $\psucc$ as given in Eq. \eqref{eq: prob_success}.
\end{enumerate}

Now, we present the methods related to each step of the protocol. 

\subsection{Preparing the register state}\label{app_subsubsec: register_preparation}
The register is initiated at a Gibbs state $\gibbs{\beta} = \exp(-\beta H)/Z$,  where $Z = \tr(\exp(-\beta H))$, at inverse temperature $\beta$ and Hamiltonian $H = \sum_i E_i \ketbra{E_i}{E_i}$. The information $X$ can be encoded in the diagonal of the register by applying a unitary operation $U\R$ over $\gibbs{\beta}$, as stated in the Remark~\ref{remark orthogonal pure states}, 
\begin{eqnarray}
    \rho\R = U\R \gibbs{\beta} U\R^{\dagger}.
\end{eqnarray}
By considering $u_{i,j}$ being the elements of $U\R$. The state $\rho\R=\sum_{x=0}^{n-1} p_x \ketbra{x}{x}+\rho\off{R}$, where $\rho\off{R}$ are the off-diagonal elements of the register state, can be prepared with the diagonal elements as
\begin{eqnarray}
    p_x = \frac{1}{Z}\sum_j |u_{x,j}|^2\displaystyle e^{-\beta E_j}. 
\end{eqnarray}

\subsection{Preparing the System State}\label{app_subsubsec: system_preparation}

Consider a system prepared in a $d\Sys$-dimensional state $\rho\Sys$. For a random variable with cardinality $n$, the Hilbert space of the system can be decomposed into $n$ subspaces, such that $\M{H}\Sys = \M{H}_1 \oplus \cdots \oplus \M{H}_n$, where each letter $x$ of $X$ is encoded in a corresponding subspace $\M{H}_x$. The partitioning of the Hilbert space depends on $n$ and $d\Sys$. 

In the following subsections, we describe how to coarse-grain the system state when access to a thermal bath with a fixed dimension $ d\Sys $ is available.

\subsubsection{Coarse-Graining the System State for $n\leq d\Sys$}

Consider a random variable $X$ with cardinality $n = d\R$, where $d\R$ is the dimension of the register. The system Hilbert space $\M{H}\Sys$ must be partitioned into $n$ subspaces to encode information. Each subspace has dimension $d_x$ such that $\sum_x d_x = d\Sys$. The system space is spanned by the set of $d_x$-dimensional projectors $\Pi_x = \sum_{l=0}^{d_x-1} \ketbra{\pi_x\suptiny{0}{0}{(l)}}{\pi_x\suptiny{0}{0}{(l)}}$. 

To encode the information of $X$, we must express $\rho\Sys$ in terms of $\Pi_x$. This can be achieved by \textit{coarse-graining} the thermal state of the system, given by $\gibbs{\beta} = \sum_i \exp(-\beta E_i)/Z\ketbra{E_i}{E_i}$, with respect to the relation $i = x d_x + l$, such that $\ket{\pi_x\suptiny{0}{0}{(l)}} = \ket{E_{x d_x + l}}$,  thereby obtaining the blocked state.
\begin{equation}\label{eq: prepared_system}
    \rho\Sys = \sum_{x=0}^{n-1}\sum_{l=0}^{d_x-1} r_x\suptiny{0}{0}{(l)}\ketbra{\pi_x\suptiny{0}{0}{(l)}}{\pi_x\suptiny{0}{0}{(l)}},
\end{equation}
with $r_x\suptiny{0}{0}{(l)} = \exp(-\beta e_x\suptiny{0}{0}{(l)})/Z\Sys$, where $e_x\suptiny{0}{0}{(l)}=E_{x d_x + l}$ are the energy eigenvalues of the system Hamiltonian, given by $H\Sys = \bigoplus_{x=0}^{n-1} H_x = \sum_x \sum_l e_x\suptiny{0}{0}{(l)} \ketbra{\pi_x\suptiny{0}{0}{(l)}}{\pi_x\suptiny{0}{0}{(l)}}$. The partition function of the system is given by $Z\Sys = \sum_x \sum_l \exp(-\beta e_x\suptiny{0}{0}{(l)})$.

\subsubsection{Coarse-Graining the System State for $n > d\Sys$}

Multiple copies of Gibbs states are required when the number of encoded states exceeds the system's dimension $(n > d\Sys)$. Considering the system as composed of $N$ copies of $\gibbs{\beta}$, its state is given by $\rho\Sys = \gibbs{\beta}^{\otimes N}$. Applying a coarse-graining procedure for $|X| \leq d\Sys^N$, we obtain:
\begin{equation}
    \rho\Sys = \sum_{x=0}^{n-1} \sum_{l=0}^{d_x-1} r_x\suptiny{0}{0}{(l)}\ketbra{\pi_x\suptiny{0}{0}{(l)}}{\pi_x\suptiny{0}{0}{(l)}},
\end{equation}
where the coarse-grained basis states are defined as 
\begin{equation}
    \ket{\pi_x\suptiny{0}{0}{(l)}} \equiv \bigotimes_{\mu=1}^N \ket{E_{k\subtiny{0}{0}{\mu}}}.
\end{equation}
The indices satisfy the relation:
\begin{equation}
    xd_x + l = \sum_{\mu=1}^N k_\mu d\Sys^{\mu-1} := f(\vec{k}),
\end{equation}
with $\vec{k} = (k_1, \dots, k_N)$. The variables $x$ and $k$ are determined by
\begin{equation}
  x = \left\lfloor \frac{f(\vec{k})}{d_x} \right\rfloor, \quad k = f(\vec{k}) \mod d_x.
\end{equation}
Here, $\lfloor\frac{a}{b}\rfloor$ denotes the floor function, which returns the greatest integer less than or equal to $\frac{a}{b}$.

\subsection{Information encoding}\label{app_subsubsec: encoding}

Initially, a register is prepared with the random variable $X$ in its diagonal 
\begin{equation}
    \rho\R = \sum_{x=0}^{n-1} p_x\ketbra{x}{x}+ \rho\subtiny{0}{0}{\text{off}},
\end{equation}

\noindent where $\rho_{x,x\pr} = \bra{x}\rho\R\ket{x\pr}$. After the coarse-graining process, the state of the system can be expressed as a composition of each subspace as 
\begin{equation}\label{eq: system blocks}
    \rho\Sys = \sum_{x=0}^{n-1} A_x. 
\end{equation}
The operators $A_x$ are defined as $A_x= \sum_{l=0}^{d_x-1} r_x\suptiny{0}{0}{(l)}\ketbra{\pi_x\suptiny{0}{0}{(l)}}{\pi_x\suptiny{0}{0}{(l)}}$, and represent the components of $\rho\Sys$ in each subspace spanned by $\Pi_x=\sum_{l=0}^{d_x-1} \ketbra{\pi_x\suptiny{0}{0}{(l)}}{\pi_x\suptiny{0}{0}{(l)}}$. It is important to notice that this subspace composition is not unique. Here, we adopt an energy-increasing form, aligning with the thermal nature of the state. The interaction between the system and the register follows a controlled operation, which is described by the following expression 
\begin{equation}
    U = \sum_x \ketbra{x}{x}\R\otimes U_x.
\end{equation}
Acting $U$ over the initial state $\rho\RS= \rho\R\otimes\rho\Sys$, the post-interaction state $\trho\RS$ will be 
\begin{align}\label{eq: rhoSR final}
    \trho\RS &= U\rho\RS U\da  = (\sum_x \ketbra{x}{x}\R\otimes U_x)(\rho\R\otimes\rho\Sys) (\sum_x \ketbra{x}{x}\R\otimes U\da_x),\\
    &= \sum_x p_x \ketbra{x}{x}\otimes U_x \rho\Sys U\da_x + \trho\off{RS},
\end{align}
where $\trho\subtiny{0}{0}{\text{off}} = \sum_{x\neq x\pr} \trho_{x,x\pr} \tr(U_x \rho\Sys U_{x\pr}\da)\ketbra{x}{x\pr}$, and $\trho_{x,x\pr} = \bra{x}\trho\R\ket{x\pr}$, with $\rho_{x,x}=p_x$. Observe that the diagonal of the register state remains unchanged.  Therefore, a register prepared once (via a Haar-random $U\R$) can be reused for arbitrarily many subsequent controlled-$U$ interaction/measurement rounds, since these operations leave its diagonal invariant and do not require re-preparation of the register. {This implies that the system state becomes}
\begin{equation}\label{eq: rhoS final}
    \trho\Sys = \sum_x p_x U_x \rho\Sys U_x\da = \sum_x p_x\rho_x.
\end{equation}
Now, considering $\rho\Sys$ as presented in Eq.~\eqref{eq: system blocks}, we must determine the effect of the unitaries $U_x$ on the components of $A_x$. To proceed, we introduce 
\begin{equation}
    U_x A_y U_x\da = \sum_{l=0}^{d_x-1} r_y\suptiny{0}{0}{(l)}U_x\ketbra{\pi_y\suptiny{0}{0}{(l)}}{\pi_y\suptiny{0}{0}{(l)}}U_x\da,
\end{equation}
and, therefore, the states of the ensemble are determined by the action of $U_x$ on the system Hamiltonian eigenbasis $\{\ket{\pi_y\suptiny{0}{0}{(l)}}\}$. 

In quantum thermodynamics, coherence is a valuable resource, and its manipulation requires an expenditure of additional energy \cite{lostaglio15,misra16,lostaglio19,Francica20,Gour22}. Consequently, only partial swaps are typically employed in such processes. Thus, the unitaries $U_x$ should act solely on the diagonal elements of the system state as
\begin{equation}
    U_x\ket{\pi_y\suptiny{0}{0}{(l)}} = \ket{\pi_{(y+x)}\suptiny{0}{0}{(l)}},
\end{equation}
where $(y+x):=(y+x)\mod n$. Later, in Theorem \ref{app_theorem optimal ensemble}, we will demonstrate that this choice of transformation yields the optimal ensemble for encoding the random variable $X$. For the sake of completeness in notation, the final state of the system can be expressed as
\begin{equation}
    \trho\Sys = \sum_x p_x \rho_x, 
\end{equation}
with $\rho_x = \sum_y U_x A_y U_x\da$. Since summing over $y$ in $U_x A_y U_x\da$ intermixes all subspaces, the states $\rho_x$ become indistinguishable. Consequently, full information cannot be extracted from the ensemble $\Sigma = \{p_x,\rho_x\}$. Notice that the indistinguishability of the $\rho_x$ is related to the overlap of the subspaces with orthogonal support, spanned by $\Pi_{z\neq x}$, 
\begin{align}\label{eq: overlap Pi and sigma}
    \tr(\Pi_z \rho_x) &= \sum_y \tr(\Pi_z U_x A_y U_x\da) = \sum_y \sum_l r_y\suptiny{0}{0}{(l)} \bra{\pi_{(y+x)}\suptiny{0}{0}{(l)}} \Pi_z \ket{\pi_{(y+x)}\suptiny{0}{0}{(l)}}= \sum_l r_{(z-x)}\suptiny{0}{0}{(l)}, 
\end{align}
where, in the second equality, we used the following property  
\begin{equation}\label{eq: overlap Pi with energy eigenvalues}
    \bra{\pi_{(y+x)}\suptiny{0}{0}{(l)}} \Pi_z \ket{\pi_{(y+x)}\suptiny{0}{0}{(l\pr)}} = \sum_{l\pr}  |\braket{\pi_{(y+x)}\suptiny{0}{0}{(l)}}{\pi_{z}^{(l\pr)}}|^2 = \sum_{l\pr} \delta_{(y+x),z}\delta_{l,l\pr}.
\end{equation}
For $x=z$, Eq.~\eqref{eq: overlap Pi and sigma} quantifies how distinguishable the states of the ensemble are: $\tr(\Pi_x \rho_x) = \sum_l r_{(0)}\suptiny{0}{0}{(l)}$. This quantity, as proved in Ref.~\cite{GuryanovaFriisHuber2018}, expresses the maximum amount of statistical correlations between the system and register, $\cmax = \sum_l r_{(0)}\suptiny{0}{0}{(l)}$. 
As stated in the following theorem, the choice of the unitaries $U_x$ and the projectors $\Pi_x$ is related to the maximization of the statistical correlations between the system and register, as well as the optimization of the distinguishability among the states of the ensemble. 

\setcounter{theorem}{2}
\begin{theorem}[Optimal ensemble via Controlled Operations]\label{app_theorem optimal ensemble}
  Consider a quantum communication protocol where the classical information $X$ is encoded in a quantum system via a controlled unitary operation of the form 
   \begin{equation*}
       U = \sum_x \ketbra{x}{x}\R \otimes U_x.
   \end{equation*} 
   If the \textit{receiver} performs an optimal decoding strategy, the probability of correctly distinguishing the states $\{\rho_x\}$ in the ensemble ($\psucc$) is given by
    \begin{equation}\label{app_eq: psucc is equal to cmax}
        \psucc = \cmax,
    \end{equation}
   where $\cmax$ quantifies the maximum statistical correlations established between the system and the register during the encoding process.
\end{theorem}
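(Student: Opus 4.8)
The plan is to use the explicit diagonal structure of the encoded ensemble to reduce the discrimination task to a classical one, and then to certify that the block projective measurement $\{\Pi_x\}$ is optimal through the Barnett--Croke conditions \cite{barnett2009conditions}. First I would import the form of the ensemble states produced by the cyclic-shift encoding $U_x\ket{\pi_y\suptiny{0}{0}{(l)}}=\ket{\pi_{(y+x)}\suptiny{0}{0}{(l)}}$ acting on the coarse-grained Gibbs state, namely $\rho_x=\sum_{z,l} r_{(z-x)}\suptiny{0}{0}{(l)}\ketbra{\pi_z\suptiny{0}{0}{(l)}}{\pi_z\suptiny{0}{0}{(l)}}$. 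Since every $\rho_x$ is diagonal in the common coarse-grained energy eigenbasis, the states mutually commute, so minimum-error discrimination is attained by a measurement diagonal in that basis; among such measurements the coarse-grained projectors $\{\Pi_x\}$ are the natural decoder, and the optimization in Eq.~\eqref{eq: prob_success} can be analyzed accordingly.

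Next I would establish the lower bound $\psucc\geq\cmax$ by scoring the strategy that outputs the letter $x$ on the projective outcome $\Pi_x$. Using the overlap identity $\tr(\Pi_z\rho_x)=\sum_l r_{(z-x)}\suptiny{0}{0}{(l)}$ from Eq.~\eqref{eq: overlap Pi and sigma}, the diagonal contributions give $\tr(\Pi_x\rho_x)=\sum_l r_{(0)}\suptiny{0}{0}{(l)}$ for every $x$, which is precisely the maximal system--register correlation $\cmax$ certified in Ref.~\cite{GuryanovaFriisHuber2018}. Weighting by $p_x$ and summing then yields $\sum_x p_x\tr(\Pi_x\rho_x)=\cmax$, so this feasible decoder already saturates the claimed value.

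The hard part will be the matching upper bound $\psucc\leq\cmax$, namely proving that no POVM outperforms $\{\Pi_x\}$. For this I would form the Lagrange operator $\Gamma=\sum_x p_x\rho_x\Pi_x$; its diagonality makes the Hermiticity requirement automatic, so everything reduces to the positivity condition $\Gamma-p_y\rho_y\geq 0$ for all $y$. Evaluated on $\ket{\pi_z\suptiny{0}{0}{(l)}}$ this collapses to the scalar family $p_z\,r_{(0)}\suptiny{0}{0}{(l)}\geq p_y\,r_{(z-y)}\suptiny{0}{0}{(l)}$, equivalently the statement that the maximum-a-posteriori rule assigns the letter $x=z$ on every eigenvector. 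I expect this to be the main obstacle: the inequality is powered by the ground-block maximality $r_{(0)}\suptiny{0}{0}{(l)}\geq r_{w}\suptiny{0}{0}{(l)}$ guaranteed by the energy-increasing coarse-graining, but it also requires controlling the prior weights $\{p_x\}$ against the spread of the thermal eigenvalues, which is where the balanced letter distribution enters. Once this inequality is secured, the Barnett--Croke conditions certify $\{\Pi_x\}$ as a globally optimal POVM, the lower bound is saturated, and $\psucc=\cmax$ follows.
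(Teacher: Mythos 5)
Your overall route coincides with the paper's: both arguments compute the feasible value $\sum_x p_x\tr(\Pi_x\rho_x)=\sum_l r_{(0)}^{(l)}=\cmax$ from the overlap identity and then try to certify the block projectors $\{\Pi_x\}$ as globally optimal via the Barnett--Croke conditions; your added observation that all $\rho_x$ commute, so that the problem reduces to classical maximum-a-posteriori (MAP) decoding over the coarse-grained basis, is a correct and transparent repackaging of that strategy, and your lower-bound half ($\psucc\geq\cmax$) is complete.

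The gap is in the upper bound, and you have located it exactly but not closed it. As you note, the positivity condition $\Gamma-p_y\rho_y\geq 0$ evaluated on the common eigenbasis collapses to the scalar family $p_z\,r_{(0)}^{(l)}\geq p_y\,r_{(z-y)}^{(l)}$ for all $y,z,l$, i.e., the statement that MAP outputs $z$ on every vector of block $z$. You leave this as an expectation (``once this inequality is secured\ldots''), but it cannot be secured from ground-block maximality $r_{(0)}^{(l)}\geq r_{w}^{(l)}$ alone: take $n=2$, a skewed prior $p_0\gg p_1$, and a nearly flat thermal spectrum (small $\beta$), so $r_{(1)}^{(l)}\approx r_{(0)}^{(l)}$; then $p_1 r_{(0)}^{(l)}\geq p_0 r_{(1)}^{(l)}$ fails, the optimal decoder simply always outputs $0$, and $\psucc\geq p_0>\cmax\approx 1/2$. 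So the equality $\psucc=\cmax$ is not prior-free, and your proof is incomplete precisely at the step that carries the theorem's content; your instinct that ``the balanced letter distribution enters'' is right, but the theorem as stated carries no such hypothesis, so you would need to either add one (e.g., uniform $p_x$, under which your scalar condition reduces to $r_{(0)}^{(l)}\geq r_{(z-y)}^{(l)}$, which the energy-increasing coarse-graining does supply) or restrict the regime of $\beta$ and the spectrum so that the prior ratios $p_y/p_z$ are dominated by the thermal ratios $r_{(0)}^{(l)}/r_{(z-y)}^{(l)}$. It is worth knowing that the paper's own verification of the second Barnett--Croke condition uses only the summed inequality $\sum_l r_{(0)}^{(l)}\geq\sum_l r_{x}^{(l)}$ and never compares $\Gamma$ against $p_y\rho_y$ block by block with the priors tracked, so it is loose at the same point; your blockwise formulation is the correct thing to verify, and stating the explicit prior condition under which it holds would actually sharpen the published argument.
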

\begin{proof}
Given the state $\trho\RS=U\rho\RS U^{\dagger}$ as computed in Eq.~\eqref{eq: rhoSR final} and the relation obtained in Eq.~\eqref{eq: overlap Pi and sigma}, for a projective measurement with elements $\Pi_x$, the statistical correlation between the system and register is 
\begin{equation}
 C(\trho\RS) = \sum_x \tr(\ketbra{x}{x}\otimes\Pi_x\trho\RS)=\sum_xp_x\sum_l r_{(0)}\suptiny{0}{0}{(l)}=\sum_l r_{(0)}\suptiny{0}{0}{(l)},
\end{equation}
which is the maximum statistical correlation ($\cmax = \max_{\{U\}}C(\trho\RS)$) that can be created between a classical register and a thermal state, as shown in Ref.~\cite{GuryanovaFriisHuber2018}. 

 For an ensemble $\Sigma = \{p_x,\rho_x\}_{x=0}^{n-1}$, the probability of success in distinguishing the states of the ensemble is defined as 
\begin{equation}
    P_{suc} = \max_{\substack{0\leq P_x\leq \id, \\ \sum_x P_x =\id}}\left(\sum_x p_x\tr(\rho_xP_x)\right).
\end{equation}
This optimization problem cannot, in general, be analytically determined. However, if a POVM, with elements $ P_x $, satisfies the following properties, it is the unique argument of the probability of success \cite{barnett2009conditions} 
\begin{align}
\label{eq: Psucc condition 1}    &P_x(p_x\rho_x - p_y\rho_y)P_y =0, \, \forall\, x,y, \\
\label{eq: Psucc condition 2}    &\sum_x  p_x\rho_xP_x - p_y\rho_y \geq 0, \, \forall\, y.
\end{align}
Considering the projective measurement, with elements $P_x$ and the set of states $\rho_y$ satisfying Eq.~\eqref{eq: overlap Pi and sigma}, we obtain
\begin{align}
    \Pi_x \rho_x &= \sum_y \sum_l r_{y}\suptiny{0}{0}{(l)}\delta_{y+x,x}\ketbra{\pi_{(x)}}{\pi_{(y+x)}},\\
    &= \sum_l r_{0}\suptiny{0}{0}{(l)}\ketbra{\pi_{(x)}}{\pi_{(x)}}. 
\end{align}
Therefore, condition \eqref{eq: Psucc condition 1} satisfies, 
\begin{align}
    \Pi_x(p_x\rho_x - p_y\rho_y)\Pi_y &= p_x \sum_l r_{0}\suptiny{0}{0}{(l)}\ketbra{\pi_{(x)}}{\pi_{(x)}}\Pi_y - p_y \sum_l r_{0}\suptiny{0}{0}{(l)}\Pi_x\ketbra{\pi_{(y)}}{\pi_{(y)}},\\
    &= \left( p_x \sum_l r_{0}\suptiny{0}{0}{(l)}\ketbra{\pi_{(x)}}{\pi_{(y)}} - p_y \sum_l r_{0}\suptiny{0}{0}{(l)}\ketbra{\pi_{(x)\suptiny{0}{0}{(l)}}\suptiny{0}{0}{(l)}}{\pi_{(y)}} \right)\delta_{x,y},\\
    &=0,\quad \forall\, x,y.
\end{align}
As $\sum_l  r_{0}\suptiny{0}{0}{(l)}\geq \sum_l  r_{x}\suptiny{0}{0}{(l)}\, \forall\, x$, we can compute
\begin{equation}
    \sum_xp_x\rho_x\Pi_x = \sum_x p_x \sum_l  r_{0}\suptiny{0}{0}{(l)} \ketbra{\pi_{(x)}}{\pi_{(x)}} \geq \sum_x p_x \sum_l  r_{x}\suptiny{0}{0}{(l)} \ketbra{\pi_{(x)}}{\pi_{(x)}} \geq p_x \sum_l  r_{x}\suptiny{0}{0}{(l)} \ketbra{\pi_{(x)}\suptiny{0}{0}{(l)}}{\pi_{(x)}\suptiny{0}{0}{(l)}},
\end{equation}
which satisfies condition \eqref{eq: Psucc condition 2} and complete the proof as $\sum_x p_x\tr(\rho_x\Pi_x) = \cmax$. 
\end{proof}

\subsection{Information decoding}\label{app_subsubsec: decoding}
For measuring the received data, it is necessary to interact with the system being measured with the measurement apparatus, and then read the statistics of the pointer. Initially, the \textit{receiver} prepares the measurement apparatus in a state denoted by $\rho\Poi$. We assume that the \textit{sender} and \textit{receiver} share perfect copies of the same thermal baths. Consequently, to comply with thermodynamic constraints, we assume that the state of the pointer is prepared in a coarse-grained Gibbs state, as presented in Eq.~\eqref{eq: prepared_system}
\begin{equation}
    \rho\Poi = \sum_{x=0}^{n-1}\sum_{l=0}^{d_x-1} r_x\suptiny{0}{0}{(l)}\ketbra{\pi_x\suptiny{0}{0}{(l)}}{\pi_x\suptiny{0}{0}{(l)}},
\end{equation}
with $r_x\suptiny{0}{0}{(l)} = \exp{(-\beta e_x\suptiny{0}{0}{(l)})}/Z\Poi$, $e_x\suptiny{0}{0}{(l)}$ are the energy eigenvalues of the system Hamiltonian $H\Poi = \bigoplus_{x=0}^{n-1} H_x =\sum_x(\sum_l e_x\suptiny{0}{0}{(l)} \ketbra{\pi_x\suptiny{0}{0}{(l)}}{\pi_x\suptiny{0}{0}{(l)}})$, and $Z\Poi = \sum_x\sum_l \exp{(-\beta e_x\suptiny{0}{0}{(l)})}$ is the partition function of the pointer state. 
In this work, we follow the non-ideal measurement process outlined in Refs.~\cite{GuryanovaFriisHuber2018,DebarbaManzanoGuryanovaHuberFriis2019}. In this framework, the statistics of the state being measured can be replicated in the pointer if the system and pointer are maximally correlated, as quantified by $\cmax$. This replication occurs through an interaction governed by an {\it unbiased} unitary, resulting in a loss of the original statistics of the measured system.

The information transmitted will be encoded in the ensemble $\Sigma = \{ p_x,\, \rho_x\}$, where in each round the \textit{receiver} will get a letter $x$, represented by the state $\rho_x$, with probability $p_x$. As the information was encoded on the subspaces spanned by $\Pi_x = \sum_l\ketbra{\pi_{x}\suptiny{0}{0}{(l)}}{\pi_{x}\suptiny{0}{0}{(l)}}$, a good strategy for the \textit{receiver} is to measure in the same basis, as it optimizes the probability of success in discriminating the received states, as stated in Theorem~\ref{app_theorem optimal ensemble}. For a given received state $\rho_x$ the probability of obtaining a result $\Pi_y$, as computed in Eq.~\eqref{eq: overlap Pi and sigma}, will be 
\begin{equation}\label{eq: prop y given x}
    p(y|x) = \tr(\Pi_y\rho_x) = \sum_l r_{(y-x)}\suptiny{0}{0}{(l)}, 
\end{equation}
where $r_{(y-x)}\suptiny{0}{0}{(l)}$ given in Eq.~\eqref{eq: prepared_system} represents the diagonal elements of the transformed state of the pointer. Since the optimal protocol maximizes both the distinguishability of the states and the statistical correlations created between the system and the pointer, the probability of measuring $\Pi_x$ given $\rho_x$ is also maximized.
\begin{equation}\label{eq: prob x given x}
     p(x|x) = \sum_l r_{0}\suptiny{0}{0}{(l)} = \cmax = \psucc,\quad \forall\, x=\{0,\ldots,n-1\}. 
\end{equation}
Therefore, the information extracted by the \textit{receiver} ultimately depends solely on preparing the pointer state. Specifically, the larger the populations in the subspace $\Pi_0$, the closer the conditional probability $p(x|x)$ will approach one. On average, the information extracted by the \textit{receiver} is characterized by the probability vector, whose elements are given by 
\begin{eqnarray}
    p(y) = \sum_x p_x p(y|x),
\end{eqnarray}
where $ p_x $ are the probabilities of the original information and $ p(y|x) $ are given by Eq.~\eqref{eq: prop y given x}. The random variable obtained by the \textit{receiver}, $ Y = \{y,\,p(y)\} $, becomes closer to the original message $ X = \{x,\,p_x\} $ as the $ l_1 $-norm is upper bounded by the probability of error, which is given by $ 1 - \psucc = 1 - \cmax $.
\begin{align}
    ||Y-X||_{l_1} 
    &= \frac{1}{2}\sum_y|p(y) - p_y| = \frac{1}{2}\sum_y|\sum_x p_x p(y|x) - p_y|,\\
    &= \frac{1}{2}\sum_y|\sum_{x\neq y} p_x p(y|x)+ p_yp(y|y) - p_y| = \frac{1}{2}\sum_y|\sum_{x\neq y} p_x p(y|x)+ p_y(1-\cmax)|,\\
    &\leq \frac{1}{2}\sum_y\sum_{x\neq y} p_x p(y|x) + \frac{1}{2}\sum_yp_y(1-\cmax),\\ \label{eq: distance Y-X}
    & = 1-\cmax = 1-\psucc,
\end{align}
where we used the triangle inequality and $\sum_y\sum_{x\neq y} p_x p(y|x) = \sum_x\sum_{y\neq x} p_x p(y|x)= 1- \cmax$. 

The information accessed by the \textit{receiver}, in contrast with the information transmitted by the \textit{sender}, can be quantified by the mutual information between $X$ and $Y$
\begin{eqnarray}
    I(X\!:\!Y) = H(Y) - H(Y|X),
\end{eqnarray}
where $ H(Y) = - \sum_y p(y) \log p(y) $ is the Shannon information of the random variable $ Y $, and $ H(Y|X) = - \sum_{x,y} p_x p(y|x) \log p(y|x) $ is the conditional entropy. Applying inequality \eqref{eq: distance Y-X} within Fano's inequality $ H(Y|X) \leq H_2(\psucc) + (1-\psucc) \log(n-1) $ \cite{fano1961transmission}, and using Theorem~\ref{app_theorem optimal ensemble}, one obtains
\begin{eqnarray}\label{app_eq: mutual information}
    I(X\!:\!Y) \geq H(X) - H_2(\cmax)-(1-\cmax)\log(n-1),
\end{eqnarray}
where $H_2(p) = -p\log p - (1-p)\log(1-p)$ is the binary entropy and $H(X)\leq H(Y)$ as $p(y) = \sum_x p_x p(y|x)$ and the Shannon entropy is a Schur-concave function.

We observe in Eq.~\eqref{app_eq: mutual information} that the original information, quantified by $H(X)$, is diminished by the imperfections in the correlations created between the register and the system during the encoding process. These imperfections result in limitations on the amount of information that the \textit{receiver} can access. During the encoding process, an amount of information $\chi(\Sigma)$ is transferred to the ensemble $\Sigma = \{p_x, \rho_x\}$, and therefore, during the decoding process, the \textit{receiver} can recover an amount $I(X:Y)$ of the original information $H(X)$. Consequently, we can compute these information limitations by using the following chain inequality
\begin{equation}
    H(X)\geq \chi(\Sigma)\geq I(X\!:\!Y)\geq H(X) - H_2(\cmax)-(1-\cmax)\log(n-1).
\end{equation}



\end{document}